\documentclass[letter,11pt]{article}%
\usepackage{amsfonts}
\usepackage{amsmath}
\usepackage{amssymb}
\usepackage{graphicx}%
\setcounter{MaxMatrixCols}{30}
%TCIDATA{OutputFilter=latex2.dll}
%TCIDATA{Version=5.50.0.2960}
%TCIDATA{LastRevised=Tuesday, June 27, 2017 14:17:06}
%TCIDATA{<META NAME="GraphicsSave" CONTENT="32">}
%TCIDATA{<META NAME="SaveForMode" CONTENT="1">}
%TCIDATA{BibliographyScheme=Manual}
%TCIDATA{PageSetup=72,72,72,72,0}
%TCIDATA{Counters=arabic,1}
%TCIDATA{AllPages=
%H=36
%F=36
%}
%BeginMSIPreambleData
\providecommand{\U}[1]{\protect\rule{.1in}{.1in}}
%EndMSIPreambleData

\providecommand{\U}[1]{\protect\rule{.1in}{.1in}}
\providecommand{\U}[1]{\protect\rule{.1in}{.1in}}
\parskip 0.6ex
\parindent 0ex
\newtheorem{theorem}{Theorem}
\newtheorem{lemma}[theorem]{Lemma}
\newtheorem{proposition}[theorem]{Proposition}

\newenvironment{proof}[1][Proof]{\noindent\textbf{#1.} }{\ \rule{0.5em}{0.5em}}
\newcommand{\beq}{\begin{equation}}
\newcommand{\eeq}{\end{equation}}

\setlength{\oddsidemargin}{-.6cm}
\setlength{\evensidemargin}{-.6cm} \setlength{\textwidth}{16.7cm}
\setlength{\topmargin}{-1.3cm} \setlength{\textheight}{23.2cm}
\newenvironment{keywords}
{\begin{trivlist}\item[]{\bfseries\sffamily Keywords:}\ }
{\end{trivlist}}
\begin{document}

\title{Approximation Schemes for Minimizing the Maximum Lateness on a Single Machine
with Release Times under Non-Availability or Deadline Constraints}
\author{Imed Kacem$^{\ast}$, Hans Kellerer$^{+}$\\$^{\ast}$ Laboratoire de Conception, Optimisation et Mod\'{e}lisation des
Syst\`{e}mes, \\LCOMS EA 7306, Universit\'{e} de Lorraine, Metz, France\\\texttt{imed.kacem@univ-\texttt{l}orraine.fr}\\$^{+}$ Institut f\"ur Statistik und Operations Research,\\ISOR, Universit\"{a}t Graz, Graz, Austria\\\texttt{hans.kellerer@uni-graz.at}}
\maketitle

\begin{abstract}
In this paper, we consider four single-machine scheduling problems with
release times, with the aim of minimizing the maximum lateness. In the first
problem we have a common deadline for all the jobs. The second problem looks
for the Pareto frontier with respect to the two objective functions maximum
lateness and makespan. The third problem is associated with a non-availability
constraint. In the fourth one, the non-availibility interval is related to the
operator who is organizing the execution of jobs on the machine (no job can
start, and neither can complete during the operator non-availability period).
For each of the four problems, we establish the existence of a polynomial time
approximation scheme (PTAS).

\end{abstract}

\begin{keywords}
Single machine scheduling; release times; lateness; deadlines; approximation algorithms.
\end{keywords}

\section{Introduction}

The problem we consider is the one-machine scheduling problem with release
dates and delivery times. The objective is to minimize the maximum lateness.
Formally, the problem is defined in the following way. We have to schedule a
set $J=\{1,2,\ldots,n\}$ of $n$ jobs on a single machine. Each job $j\in J$
has a processing time $p_{j}$, a release time (or head) $r_{j}$ and a delivery
time (or tail) $q_{j}$. The machine can only perform one job at a given time.
Preemption is not allowed. The problem is to find a sequence of jobs, with the
objective of minimizing the maximum lateness $L_{\max}=\max_{1\leq j\leq
n}\left\{  C_{j}+q_{j}\right\}  $ where $C_{j}$ is the completion time of job
$j$. We also define $s_{j}$ as the starting time of job $j$, i.e.,
$C_{j}=s_{j}+p_{j}$, then $P=\sum_{j=1}^{n}p_{j}$ as the total processing time
and $p(H)=\sum_{j\in H}p_{j}$ as the total processing time for a subset of
jobs $H\subset J$. Four scenarios are considered:

\begin{itemize}
\item Scenario 1: every job should be completed before a deadline $d$, i.e.,
$C_{\max}=\max_{1\leq j\leq n}\{C_{j}\}\leq d$. This scenario is denoted as
$1|r_{j},C_{\max}\leq d|L_{\max}$. For a simpler notation, this variant is
also abbreviated as $\Pi_{1}$.

\item Scenario 2: we are given the one-machine bicriteria scheduling problem
with the two objectives $L_{\max}$ and $C_{\max}$, denoted by $1|r_{j}%
|L_{\max}, C_{\max}$. We will speak shortly of problem $\Pi_{2}$.

\item Scenario 3: the machine is not available during a given time interval
$]T_{1},T_{2}[$. This scenario is denoted by $1,h_{1}|r_{j}|L_{\max}$. We will
speak shortly of problem $\Pi_{3}$. Here, $]T_{1},T_{2}[$ is a machine
non-availability (MNA) interval.

\item Scenario 4: in this case, the non-availibility interval $]T_{1},T_{2}[$
is related to the operator who is organizing the execution of jobs on the
machine. An operator non-availability (ONA) period is an open time interval in
which no job can start, and neither can complete. Using an extended notation
as in \cite{Seifa}, this scenario is denoted by $1,ONA|r_{j}|L_{\max}$. We
will speak shortly of proble~$\Pi_{4}$.
\end{itemize}

Note that the main difference between machine non-availability and operator
non-availability consists in the fact that a job can be processed but can
neither start nor finish during the ONA period. However, the MNA interval is a
completely forbidden period.

All four presented scenarios are generalizations of the well-known problem
$1|r_{j},q_{j}|L_{max}$ which has been widely studied in the literature. For
the sake of simplicity, this problem will be denoted by $\Pi_{0}$. According
to Lenstra et al~\cite{Lenstra} problem $1|r_{j},q_{j}|L_{max}$ is NP-hard in
the strong sense. Therefore, we are interested in the design of efficient
approximation algorithms for our problems.

For self-consistency, we recall some necessary definitions related to the
approximation area. A $\rho$-\textit{approximation algorithm} for a problem of
minimizing an objective function $\mathcal{\varphi}$ is an algorithm such that
for every instance $\pi$ of the problem it gives a solution $S_{\pi}$
verifying $\mathcal{\varphi}\left(  S_{\pi}\right)  /\mathcal{\varphi}\left(
OPT_{\pi}\right)  \leq\rho$ where $OPT_{\pi}$ is an optimal solution of $\pi$.
The value $\rho$ is also called the \textit{worst-case bound} of the above algorithm.

A class of $\left(  1+\varepsilon\right)  $-approximation algorithms is called
a \textit{PTAS} (\textit{Polynomial Time Approximation Scheme}), if its
running time is polynomial with respect to the length of the problem input for
every $\varepsilon>0$. A class of $\left(  1+\varepsilon\right)
$-approximation algorithms is called an \textit{FPTAS} (\textit{Fully
Polynomial Time Approximation Scheme}), if its running time is polynomial with
respect to both the length of the problem input and $1/\varepsilon$ for every
$\varepsilon>0$.

Given the aim of this paper, we summarize briefly related results on
scheduling problems with non-availability constraints but no release times. A
huge number of papers is devoted to such problems in the literature. This has
been motivated by practical and real industrial problems, like maintenance
problems or the occurrence of breakdown periods. For a survey we refer to the
article by~\cite{Lee3}. However, to the best of our knowledge, there is only a
limited number of references related to the design of approximation algorithms
in the case of maximum lateness minimization. Yuan et al~\cite{YUAN} developed
a PTAS for the problem without release dates. The paper by Kacem et
al~\cite{Seifa} contains an FPTAS for $1,h_{1}||L_{\max}$ and $1,ONA||L_{\max
}$. Kacem and Kellerer \cite{KKTCS} consider different close semi-online
variants ($1,h_{1}||C_{\max}$, $1,h_{1}|r_{j}|C_{\max}$ and $1,h_{1}||L_{\max
}$) and propose approximation algorithms with effective competitive ratios.

Numerous works address problems with release times but without unavailability
constraints. Most of the exact algorithms are based on enumeration techniques.
See for instance the papers by Dessouky and Margenthaler~\cite{dessouky},
Carlier et al~\cite{CMHG}, Larson et al~\cite{Larson} and Grabowski et
al~\cite{Grabowski}.

Various approximation algorithms were also proposed. Most of these algorithms
are based on variations of the extended Jackson's rule, also called Schrage's
algorithm. Schrage's algorithm consists in scheduling ready jobs on the
machine by giving priority to the one having the greatest tail. It is
well-known that the Schrage sequence yields a worst-case performance ratio of
$2$. This was first observed by Kise et al \cite{Kise}. Potts~\cite{Potts}
improves this result by running Schrage's algorithm at most $n$ times to
slightly varied instances. The algorithm of Potts has a worst-case performance
ratio of $\frac{3}{2}$ and it runs in $O(n^{2}\log n)$ time. Hall and
Shmoys~\cite{Hall} showed that a modification of the algorithm of Potts has
the same worst-case performance ratio under precedence constraints. Nowicki
and Smutnicki \cite{Nowicki} proposed a faster $\frac{3}{2}$-approximation
algorithm with $O(n\log n)$ running time. By performing the algorithm of Potts
for the original and the inverse problem and taking the best solution Hall and
Shmoys \cite{Hall} established the existence of a $\frac{4}{3}$-approximation.
They also proposed two polynomial time approximation schemes. A more effective
PTAS has been proposed by Mastrolilli \cite{Mastrolilli} for the
single-machine and parallel-machine cases. For more details the reader is
invited to consult the survey by Kellerer in \cite{kellererML}.

There are only a few papers which treat the problem with both release times
and non-availability intervals. Leon and Wu~\cite{Leon} present a
branch-and-bound algorithm for the the problem with release times and several
machine non-availability intervals. A 2-approximation has been established by
Kacem and Haouari~\cite{KH} for $1,h_{1}|r_{j}|L_{\max}$.

Rapine et al. \cite{Rapine} have described some applications of operator
non-availability problems in the planning of a chemical experiments. Brauner
et al. \cite{Brauner} have studied single-machine problems under the operator
non-availability constraint. Finally, we refer to the book by T'kindt and
Billaut~\cite{tkindt} for an introduction into multicriteria scheduling.

For each of the four problems $\Pi_{1}$, $\Pi_{2}$, $\Pi_{3}$ and $\Pi_{4}$ we
will present a polynomial time approximation scheme. Notice, that the PTAS's
in~ \cite{Hall} and~ \cite{Mastrolilli} for $1|r_{j},q_{j}|L_{max}$ use
modified instances with only a constant number of release dates. This is not
possible in the presence of deadlines or non-availability periods since the
feasibility of a solution would not be guaranteed.

The paper is organized as follows. Section~2 repeats some notations and
results for Schrage's sequence. Section~3 contains the PTAS for problem
$\Pi_{1}$, Section~4 is devoted to the bicriteria scheduling problem $\Pi_{2}%
$, Sections~5 and~6 deal with problems $\Pi_{3}$ and $\Pi_{4}$, respectively.
Finally, Section~7 concludes the paper.

\section{Schrage's Algorithm}

\label{sec:prel}

For self-consistency we repeat in this section some notations and results for
the Schrage's sequence, initially designed for problem $\Pi_{0}$ and later
exploited for other extensions (see for instance Kacem and Kellerer
\cite{KKDam}). Recall that Schrage's algorithm schedules the job with the
greatest tail from the available jobs at each step. At the completion of such
a job, the subset of the available jobs is updated and a new job is selected.
The procedure is repeated until all jobs are scheduled. For a given instance
$I$ the sequence obtained by Schrage shall be denoted by $\sigma_{Sc}(I)$.
Assume that jobs are indexed such that $\sigma_{Sc}(I)=(1,2,...,n)$. The job
$c$ which attains the maximum lateness in Schrage's schedule, is called the
\emph{critical job\/}. Then the maximum lateness of $\sigma_{Sc}(I)$ can be
given as follows:%

\[
L_{max}(\sigma_{Sc}(I))=\min_{j\in\Lambda}\{r_{j}\}+\sum_{j\in\Lambda}{p_{j}%
}+q_{c}=r_{a}+\sum_{j=1}^{c}{p_{j}}+q_{c},
\]
where job $a$ is the first job so that there is no idle time between the
processing of jobs $a$ and $c$, i.e., either there is idle time before $a$ or
$a$ is the first job to be scheduled. The sequence of jobs $a,a+1,\ldots,c$ is
called the \emph{critical path\/} in the Schrage schedule (or the
\emph{critical bloc} $\Lambda$). It is obvious that all jobs $j$ in the
critical path have release dates $r_{j}\geq r_{a}$.

Let $L^{*}_{max}(I)$ denote the optimal maximum lateness for a given instance
$I$. We will write briefly $L^{*}_{max}$ if it is clear from the context. For
every subset of jobs $F \subset J$ the following useful lower bound is valid.%

\begin{equation}
L^{*}_{max} \geq\min_{j \in F}\{r_{j}\} + \sum_{j \in F}{p_{j}} + \min_{j \in
F}\{q_{j}\} \label{eq66}%
\end{equation}

If $c$ has the smallest tail in $\Lambda=\{a,a+1,\ldots,c\}$, sequence
$\sigma_{Sc}(I)$ is optimal by inequality~(\ref{eq66}). Otherwise, there
exists an \emph{interference job\/} $b\in\Lambda$ such that%

\begin{equation}
q_{b} < q_{c} \ \mbox{and}\ q_{j}\geq q_{c}\ \mbox{for all} \ j \in\{b + 1, b
+ 2,\ldots, c-1\}. \label{eq4}%
\end{equation}

Let $\Lambda_{b}:=\{b+1,b+2,\ldots,c\}$ be the jobs in $\Lambda$ processed
after the interference job $b$. Clearly, $q_{j}\geq q_{c}>q_{b}$ and
$r_{j}>s_{b}$ hold for all $j\in\Lambda_{b}$, where $s_{b}$ denotes the
starting time of the interference job $b$. Inequality (\ref{eq66}) applied to
$\Lambda_{b}$ gives with (\ref{eq4}) the lower bound
\begin{equation}
L_{max}^{\ast}\geq\min_{j\in\Lambda_{b}}r_{j}+p(\Lambda_{b})+\min_{j\in
\Lambda_{b}}q_{j}>s_{b}+p(\Lambda_{b})+q_{c}. \label{eq4a}%
\end{equation}
where $p(\Lambda_{b})$ is the sum of processing times of jobs in $\Lambda_{b}$.

Since there is no idle time during the execution of the jobs of the critical
sequence, the maximum lateness of Schrage's schedule is $L_{\max}(\sigma
_{Sc}(I))=s_{b}+p_{b}+p(\Lambda_{b})+q_{c}$. Subtracting this equation from
(\ref{eq4a}) we get the following upper bound for the absolute error in terms
of the processing time $p_{b}$ of the interference job $b$
\begin{equation}
L_{max}(\sigma_{Sc}(I))-L_{max}^{\ast}<p_{b}. \label{eq5}%
\end{equation}

Notice that using $L^{*}_{max}\geq p_{b}$ and applying inequality (\ref{eq5})
shows that Schrage yields a relative performance guarantee of 2.

It is well-known that Schrage's sequence is optimal with respect to the
makespan, i.e.,%

\begin{equation}
C_{\max}(\sigma_{Sc}(I))=C_{max}^{\ast}. \label{eq5a}%
\end{equation}

\section{PTAS for the first scenario: $1|r_{j},C_{j}\leq d|L_{\max}$}

In this section, we consider problem $\Pi_{1}$ with a common deadline $d$ for
the jobs. By~(\ref{eq5a}) the Schrage sequence shows whether a feasible
solution for $\Pi_{1}$ exists or not. If $C_{\max}(\sigma_{Sc}(I))>d$, then
the problem has no feasible solution. Hence, we will assume in the following that%

\begin{equation}
C_{\max}(\sigma_{Sc}(I))\leq d \label{eq00}%
\end{equation}

Let $\varepsilon>0$. A job $j$ is called \emph{large\/} if $p_{j}%
\geq\varepsilon L_{\max}(\sigma_{Sc}(I))/2$, otherwise it is called
\emph{small}. Let $L$ be the subset of large jobs. Since $L_{\max}(\sigma
_{Sc}(I))\leq2L_{\max}^{\ast}$ (see~\cite{Kise}), it can be observed that
$|L|\leq2/\varepsilon$. Let $k=|L|$. We assume that jobs are indexed such that
$L=\{1,2,\ldots,k\}$.

Our PTAS is based on the construction of a set of modified instances starting
from the original instance $I$. Set $R=\{r_{1},\ldots,r_{n}\}$ and
$Q=\{q_{1},\ldots,q_{n}\}$. Let us define the following sets of heads and tails:%

\begin{align*}
R\left(  i\right)   &  =\{r_{j}\in R|r_{j}\geq r_{i}\},\quad i=1,2,\ldots,k,\\
Q(i)  &  =\{q_{j}\in Q|q_{j}\geq q_{i}\}, \quad i=1,2,\ldots,k.
\end{align*}

Now, we define the set of all the combinations of couples $(r,q)\in R\left(
i\right)  \times Q(i)$ for every $i=1,2,\ldots,k$:%

\[
W=\{(\widetilde{r}_{1},\widetilde{q}_{1},\widetilde{r}_{2},\widetilde{q}%
_{2},\ldots,\widetilde{r}_{k},\widetilde{q}_{k})|\widetilde{r}_{i}\in R\left(
i\right)  ,\widetilde{q}_{i}\in Q\left(  i\right)  , i=1,2,\dots, k\}\text{ }%
\]

Clearly, the cardinality of $W$ is bounded as follows: $|W|\leq n^{2k}%
=O(n^{4/\varepsilon})$. Let $w\in W$ with $w=(\widetilde{r}_{1},\widetilde{q}%
_{1},\widetilde{r}_{2},\widetilde{q}_{2},\ldots,\widetilde{r}_{k}%
,\widetilde{q}_{k})$. Instance $I_{w}$ is a slight modification of instance
$I$ and is defined as follows. $I_{w}$ consists of large and small jobs. The
$k$ large jobs have modified release times $\widetilde{r}_{1},\widetilde{r}%
_{2},\ldots,\widetilde{r}_{k}$ and delivery times $\widetilde{q}%
_{1},\widetilde{q}_{2},\ldots,\widetilde{q}_{k}$. All processing times and the
$n-k$ small jobs remain unchanged. Let $\mathcal{I}$ be the set of all
possible instances $I_{w}$, i.e., $\mathcal{I}=\{I_{w}|w\in W\}$. It is clear
that $|\mathcal{I}|$ is in $O(n^{4/\varepsilon})$. By the modification, the
processing times of instances in $\mathcal{I}$ are not changed and release
times and delivery times are not decreased.

Let $\widetilde{\mathcal{I}}\subseteq\mathcal{I}$. An instance $I^{\prime}%
\in\widetilde{\mathcal{I}}$ is called \emph{maximal\/} if there is no other
instance $I^{\prime\prime}\in\widetilde{\mathcal{I}}$ such that for every
$i=1,2,\ldots,k$, we have $r_{i}^{\prime}\leq r_{i}^{\prime\prime}$,
$q_{i}^{\prime}\leq q_{i}^{\prime\prime}$ and at least one inequality is
strict. Here, $r_{i}^{\prime}, q_{i}^{\prime}$ denote the heads and tails in
$I^{\prime}$ and $r_{i}^{\prime\prime}, q_{i}^{\prime\prime}$ the heads and
tails in $I^{\prime\prime}$, respectively.

Now, we can introduce our procedure PTAS1 which depends on the instance $I$,
the deadline $d$ and the accuracy $\varepsilon$. \bigskip\ 

\noindent\textbf{Algorithm $PTAS1(I,d,\varepsilon$}\textbf{)}

\begin{description}
\item[\textsc{{Input:}}] An instance $I$ of $n$ jobs, a deadline $d$ and an
accuracy $\varepsilon$.

\item[\textsc{{Output:}}] A sequence $\sigma(I)$ with $L_{\max}(\sigma
(I))\leq(1+\varepsilon)L_{\max}^{\ast}$ and $C_{\max}(\sigma(I))\leq d$.
\end{description}

\begin{enumerate}
\item Run Schrage's algorithm for all instances in $\mathcal{I}$.

\item Select the best feasible solution, i.e., the best solution with
$C_{\max}\leq d$. Apply the corresponding sequence to the original instance
$I$.
\end{enumerate}

\medskip\noindent Let $\sigma^{\ast}\left(  I\right)  $ be a sequence for
instance $I$ which is optimal for $L_{\max}$ under the constraint $C_{\max
}(\sigma^{\ast}(I))\leq d$. Let $\widetilde{I}\in\mathcal{I}$ be an instance
which is \emph{compatible\/} with $\sigma^{\ast}\left(  I\right)  $, i.e.,%

\begin{equation}
\label{comp0}C_{\max}(\sigma^{\ast}(\widetilde{I}))\leq d,
\end{equation}

\begin{equation}
\label{comp1}\widetilde{r}_{j}\leq s_{j}(\sigma^{\ast}\left(  I\right)  )
\text{ for }j=1,2,\ldots,k,
\end{equation}

\begin{equation}
\label{comp2}L_{\max}( \sigma^{\ast}(\widetilde{I})) = L_{\max}(\sigma^{\ast
}(I)).
\end{equation}

The set of all instances $\widetilde{I}\in\mathcal{I}$ which are compatible
with $\sigma^{\ast}$ is denoted as $\mathcal{I}_{\sigma^{\ast}}$. By
(\ref{eq00}) instance $I$ fulfills the conditions (\ref{comp0}) to
(\ref{comp2}). Thus, $\mathcal{I}_{\sigma^{\ast}}$ is nonempty.

\begin{theorem}
\label{theo1} Algorithm PTAS1 yields a $(1+\varepsilon)$-approximation for
$\Pi_{1}$ and has polynomial running time for fixed $\varepsilon$.
\end{theorem}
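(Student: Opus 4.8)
The plan is to prove two things: (1) the running time is polynomial for fixed $\varepsilon$, and (2) the algorithm achieves a $(1+\varepsilon)$-approximation while respecting the deadline constraint.

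The running time is the easy part. We established $|\mathcal{I}| = O(n^{4/\varepsilon})$, and Schrage's algorithm runs in polynomial time (e.g. $O(n\log n)$) on each instance. So the total running time is $O(n^{4/\varepsilon} \cdot n\log n)$, which is polynomial for fixed $\varepsilon$.

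For the approximation guarantee, the feasibility is immediate: Step 2 only selects solutions with $C_{\max}\leq d$, and by~(\ref{eq00}) the original instance $I$ itself is in $\mathcal{I}$ and is feasible, so a feasible solution always exists. The approximation bound is the real content. The plan is to exploit a compatible instance. Since $\mathcal{I}_{\sigma^{\ast}}$ is nonempty, pick a *maximal* compatible instance $\widetilde{I}\in\mathcal{I}_{\sigma^{\ast}}$. By~(\ref{comp2}), running Schrage on $\widetilde{I}$ and then applying that sequence to $I$ should be compared against $L_{\max}(\sigma^{\ast}(I)) = L_{\max}^{\ast}$. I would analyze Schrage's output $\sigma_{Sc}(\widetilde{I})$. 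If its critical job $c$ has the smallest tail in the critical block, then by~(\ref{eq66}) Schrage is already optimal for $\widetilde{I}$, and since modified heads/tails are not smaller than the original ones, $L_{\max}(\sigma_{Sc}(\widetilde{I})) \leq L_{\max}(\sigma^{\ast}(\widetilde{I})) = L_{\max}^{\ast}$, giving the bound directly. Otherwise, there is an interference job $b$, and the error bound~(\ref{eq5}) gives $L_{\max}(\sigma_{Sc}(\widetilde{I})) - L_{\max}^{\ast}(\widetilde{I}) < p_b$.

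The main obstacle, and where the argument earns its keep, is showing that the interference job $b$ must be \emph{small}, so that $p_b < \varepsilon L_{\max}(\sigma_{Sc}(I))/2 \leq \varepsilon L_{\max}^{\ast}$ and hence the additive error is at most $\varepsilon L_{\max}^{\ast}$. This is precisely where maximality of $\widetilde{I}$ is used. I expect the argument to run by contradiction: if $b$ were large, then because the jobs in $\Lambda_b = \{b+1,\ldots,c\}$ all have release times exceeding $s_b$, one could increase the modified head $\widetilde{r}_b$ of the large job $b$ (or adjust a tail) to obtain another compatible instance dominating $\widetilde{I}$, contradicting maximality. Verifying that such an increase keeps the instance in $\mathcal{I}_{\sigma^{\ast}}$ — i.e. preserves all three compatibility conditions~(\ref{comp0})--(\ref{comp2}) and keeps the head within the allowed set $R(b)$ — is the delicate step that must be checked carefully. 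Once $b$ is small, combining the error bound with $L_{\max}^{\ast}(\widetilde{I}) \le L_{\max}^{\ast}(I)$ (which follows since heads and tails in $\widetilde{I}$ are not smaller than in $I$, and $\sigma^{\ast}$ restricted to $\widetilde I$ witnesses this) yields $L_{\max}(\sigma_{Sc}(\widetilde{I})) \leq (1+\varepsilon)L_{\max}^{\ast}$, completing the proof.
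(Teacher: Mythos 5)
Your overall strategy coincides with the paper's: enumerate the polynomially many modified instances, take a \emph{maximal} instance $\widetilde{I}$ compatible with $\sigma^{\ast}$, bound the error of Schrage on $\widetilde{I}$ by $p_b$ via (\ref{eq5}), note that a small interference job immediately gives the $(1+\varepsilon)$ bound, and rule out a large interference job by contradiction with maximality. The running-time and feasibility parts of your write-up are correct.

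However, at the decisive step you only state what you \emph{expect} the contradiction to be, and the sketch as written does not go through without a case distinction that you never make. Neither of the two moves you mention works unconditionally. Raising $\widetilde{r}_b$ to $r_{\min}=\min_{i\in\Lambda_b}\{r_i\}$ preserves condition (\ref{comp1}) only if some job of $\Lambda_b$ is processed \emph{before} $b$ in $\sigma^{\ast}$: then $b$ starts in $\sigma^{\ast}$ after that job completes, hence $s_b(\sigma^{\ast}(I))\geq r_{\min}$, and since $r_{\min}>s_b\geq\widetilde{r}_b$ the new instance strictly dominates $\widetilde{I}$. If instead \emph{all} jobs of $\Lambda_b$ come after $b$ in $\sigma^{\ast}$, there is no lower bound on $s_b(\sigma^{\ast}(I))$ beyond $\widetilde{r}_b$, so raising the head may destroy (\ref{comp1}); the correct move in that case is to raise $\widetilde{q}_b$ to $q_c$, which preserves (\ref{comp2}) because $C_b(\sigma^{\ast})+q_c\leq C_c(\sigma^{\ast})+q_c\leq L_{\max}(\sigma^{\ast}(I))$, and is a strict increase by (\ref{eq4}). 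Conversely, raising the tail when $c$ precedes $b$ in $\sigma^{\ast}$ could break (\ref{comp2}). So which parameter can be increased is dictated by the position of the jobs of $\Lambda_b$ in the optimal schedule; this two-case analysis (Subcases 2.a and 2.b in the paper) is the real content of the proof and is absent from your argument. A smaller slip: you justify $L_{\max}^{\ast}(\widetilde{I})\leq L_{\max}^{\ast}(I)$ by saying heads and tails in $\widetilde{I}$ are not smaller than in $I$, but that observation points the wrong way (larger heads and tails can only increase the optimum); what actually gives the inequality is compatibility itself, namely that the sequence $\sigma^{\ast}$ applied to $\widetilde{I}$ achieves $L_{\max}(\sigma^{\ast}(\widetilde{I}))=L_{\max}(\sigma^{\ast}(I))$ by (\ref{comp2}), which upper-bounds the optimum of $\widetilde{I}$.
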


\begin{proof}
Let $I_{\max}$ be a maximal instance in $\mathcal{I}_{\sigma^{\ast}}$.
Applying Schrage's algorithm to $I_{\max}$ gives the sequence $\sigma
_{Sc}(I_{\max})$. We show%

\begin{equation}
C_{\max}(\sigma_{Sc}(I_{\max}))\leq d \label{proof1}%
\end{equation}
and
\begin{equation}
L_{\max}(\sigma_{Sc}(I_{\max}))\leq(1+\varepsilon)L_{\max}(\sigma^{\ast}(I)).
\label{proof2}%
\end{equation}

From (\ref{eq5a}) and (\ref{comp0}) we conclude that $C_{\max}(\sigma
_{Sc}(I_{\max}))\leq C_{\max}(\sigma^{\ast}(I_{\max}))\leq d$ and
(\ref{proof1}) follows.

Several cases are distinguished:

- Case 1: $\sigma_{Sc}(I_{\max})$ is optimal, i.e., $L_{\max}(\sigma
_{Sc}(I_{\max}))=L_{\max}(\sigma^{\ast}(I))$ and by (\ref{comp2}) we get
(\ref{proof2}).

- Case 2: $\sigma_{Sc}(I_{\max})$ is not optimal. Thus, there is an
interference job $b$. Recall that $\Lambda_{b}=\{b,b+1,\ldots,c\}$ is the set
of jobs processed after the interference job until the critical job $c$. By
(\ref{eq5}) $L_{\max}(\sigma_{Sc}(I_{\max}))-L_{\max}(\sigma^{\ast}(I_{\max
}))<p_{b}$. If $b$ is not large, then (\ref{proof2}) follows immediately with
(\ref{eq5}). Otherwise, two subcases occur.

\ \ Subcase 2.a: $\sigma_{Sc}(I_{\max})$ is not optimal and at least one job
of $\Lambda_{b}$ is processed before $b$ in the optimal solution. Thus, $b$
cannot start before $r_{\min}=\min_{i\in\Lambda_{b}}\{r_{i}\}>s_{b}$.
Consequently, $r_{b}$ can be increased to $r_{\min}$ and the new instance
fulfills the conditions (\ref{comp0}) to (\ref{comp2}), which contradicts the
maximality of $I_{\max}$.

\ \ Subcase 2.b: $\sigma_{Sc}(I_{\max})$ is not optimal and all the jobs of
$\Lambda_{b}$ are processed after $b$ in the optimal solution. Thus, $q_{b}$
can be increased to $q_{c}$ and the new instance fulfills the conditions
(\ref{comp0}) to (\ref{comp2}), which again contradicts the maximality of
$I_{\max}$.

Thus, we have found a sequence which is a $(1+\varepsilon)$-approximation for
$\Pi_{1}$. Since $\mathcal{I}$ has at most $O(n^{4/\varepsilon})$ elements,
algorithm PTAS1 runs in polynomial time.
\end{proof}

\section{PTAS for the second scenario: $1|r_{j}|L_{\max}, C_{\max}$}

Computing a set of solutions which covers all possible trade-offs between
different objectives can be understood in different ways. We will define this
(as most commonly done) as searching for ``efficient'' solutions. Given an
instance $I$ a sequence $\sigma(I)$ \emph{dominates\/} another sequence
$\sigma^{\prime}(I)$ if
\begin{equation}
\label{var:dom}L_{\max}(\sigma(I))\leq L_{\max}(\sigma^{\prime}%
(I))\mbox{ and } C_{\max}(\sigma(I))\leq C_{\max}(\sigma^{\prime}(I))
\end{equation}
and at least one of the inequalities~(\ref{var:dom}) is strict. The sequence
$\sigma(I)$ is called \emph{efficient} or \emph{Pareto optimal} if there is no
other sequence which dominates $\sigma(I)$. The set $\mathcal{P}$ of efficient
sequence for $I$ is called \textit{Pareto frontier}.

We have also to define what we mean by an approximation algorithm with
relative performance guarantee for our bicriteria scheduling problem. A
sequence $\sigma(I)$ is called a \emph{$(1+\varepsilon)$-approximation
algorithm\/} of a sequence $\sigma^{\prime}(I)$ if
\begin{equation}
\label{pareto1}L_{\max}(\sigma(I))\leq(1+\varepsilon) L_{\max}(\sigma^{\prime
}(I))
\end{equation}
and
\begin{equation}
\label{pareto2}C_{\max}(\sigma(I))\leq(1+\varepsilon) C_{\max}(\sigma^{\prime
}(I))
\end{equation}
hold.

A set $\mathcal{F_{\varepsilon}}$ of schedules for $I$ is called a
\emph{$(1+\varepsilon)$-approximation of the Pareto frontier\/} if, for every
sequence $\sigma^{\prime}(I)\in\mathcal{P}$, the set $\mathcal{F_{\varepsilon
}}$~contains at least one sequence $\sigma(I)$ that is a $(1+\varepsilon
)$-approximation of $\sigma^{\prime}(I)$.

A \emph{PTAS for the Pareto frontier\/} is an algorithm which outputs for
every $\varepsilon>0$, a $(1+\varepsilon)$-approximation of the Pareto
frontier and runs in polynomial time in the size of the input.

The Pareto frontier of an instance of a multiobjective optimization problem
may contain an arbitrarily large number of solutions. On the contrary, for
every $\varepsilon>0$ there exists a $(1+\varepsilon)$-approximation of the
Pareto frontier that consists of a number of solutions that is polynomial in
the size of the instance and in $\frac{1}{\varepsilon}$ (under reasonable
assumptions). An explicit proof for this observation was given by
\ Papadimitriou and Yannakakis~\cite{py00}. Consequently, a PTAS for a
multiobjective optimization problem does not only have the advantage of
computing a provably good approximation in polynomial time, but also has a
good chance of presenting a reasonably small set of solutions.

Our PTAS for the Pareto frontier of problem $\Pi_{2}$ has even the stronger
property that for every sequence $\sigma^{\prime}(I)\in\mathcal{P}$, the set
$\mathcal{F_{\varepsilon}}$~contains a sequence such that (\ref{pareto1})
holds and (\ref{pareto2}) is replaced by the inequality
\begin{equation}
\label{pareto3}C_{\max}(\sigma(I)) \leq C_{\max} (\sigma^{\prime}(I)).
\end{equation}
\bigskip

\noindent\textbf{Algorithm $PTAS2(I,\varepsilon)$}

\begin{description}
\item[\textsc{{Input:}}] An instance $I$ of $n$ jobs and an accuracy
$\varepsilon$.

\item[\textsc{{Output:}}] A PTAS for the Pareto frontier of problem $\Pi_{2}$.
\end{description}

\begin{enumerate}
\item Run Schrage's algorithm for all instances in $\mathcal{I}$ and store the
solutions in set $\mathcal{F_{\varepsilon}}$.

\item Remove all sequences from $\mathcal{F_{\varepsilon}}$ which are
dominated by other sequences in $\mathcal{F_{\varepsilon}}$.
\end{enumerate}

\begin{theorem}
\label{theo2} Algorithm PTAS2 yields a PTAS for problem $\Pi_{2}$ such that
for every sequence $\sigma^{\prime}(I)\in\mathcal{P}$, the set
$\mathcal{F_{\varepsilon}}$~contains a sequence $\sigma(I)$ such that
(\ref{pareto1}) and (\ref{pareto3}) hold. PTAS2 has polynomial running time
for fixed $\varepsilon$.
\end{theorem}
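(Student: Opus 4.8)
The plan is to reduce the bicriteria statement to the single-deadline result already established in Theorem~\ref{theo1}, exploiting the fact that the family $\mathcal{I}$ of modified instances is constructed once and does not depend on any deadline. First I would fix an arbitrary Pareto optimal sequence $\sigma^{\prime}(I)\in\mathcal{P}$ and set $d:=C_{\max}(\sigma^{\prime}(I))$. Note that $\sigma^{\prime}(I)$ is then a feasible solution of the deadline-constrained problem $\Pi_{1}$ for this particular $d$, since $C_{\max}(\sigma^{\prime}(I))=d\leq d$. Hence, if $\sigma^{\ast}(I)$ denotes a sequence minimizing $L_{\max}$ subject to $C_{\max}\leq d$, then $L_{\max}(\sigma^{\ast}(I))\leq L_{\max}(\sigma^{\prime}(I))$. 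Moreover the feasibility precondition~(\ref{eq00}) of Theorem~\ref{theo1} holds automatically here, because by~(\ref{eq5a}) we have $C_{\max}(\sigma_{Sc}(I))=C_{\max}^{\ast}\leq C_{\max}(\sigma^{\prime}(I))=d$, so that $\mathcal{I}_{\sigma^{\ast}}$ is nonempty.

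The crucial observation is that the Schrage runs performed in Step~1 of PTAS2 are exactly the runs that $PTAS1(I,d,\varepsilon)$ would carry out, and they are performed for every choice of $d$ simultaneously. Applying the maximality and compatibility argument in the proof of Theorem~\ref{theo1} to the present $d$, there is a maximal instance $I_{\max}\in\mathcal{I}_{\sigma^{\ast}}$ whose Schrage sequence satisfies $C_{\max}(\sigma_{Sc}(I_{\max}))\leq d$ and $L_{\max}(\sigma_{Sc}(I_{\max}))\leq(1+\varepsilon)L_{\max}(\sigma^{\ast}(I))$. Since the heads and tails in $I_{\max}$ are never smaller than in $I$, evaluating the order $\sigma_{Sc}(I_{\max})$ on the original instance $I$ can only decrease both $C_{\max}$ and $L_{\max}$ (completion times and lateness are monotone in the release and delivery times under a fixed permutation). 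Combining this with the previous inequalities and with $L_{\max}(\sigma^{\ast}(I))\leq L_{\max}(\sigma^{\prime}(I))$ yields $L_{\max}(\sigma_{Sc}(I_{\max}))\leq(1+\varepsilon)L_{\max}(\sigma^{\prime}(I))$, that is~(\ref{pareto1}), together with $C_{\max}(\sigma_{Sc}(I_{\max}))\leq d=C_{\max}(\sigma^{\prime}(I))$, that is the sharper bound~(\ref{pareto3}).

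Next I would account for the pruning in Step~2. The sequence $\sigma_{Sc}(I_{\max})$ is produced and stored in $\mathcal{F_{\varepsilon}}$ during Step~1. After the removal of dominated sequences, either $\sigma_{Sc}(I_{\max})$ itself survives, or it is dominated by some $\sigma(I)\in\mathcal{F_{\varepsilon}}$ with $L_{\max}(\sigma(I))\leq L_{\max}(\sigma_{Sc}(I_{\max}))$ and $C_{\max}(\sigma(I))\leq C_{\max}(\sigma_{Sc}(I_{\max}))$. In both cases $\mathcal{F_{\varepsilon}}$ retains a sequence satisfying~(\ref{pareto1}) and~(\ref{pareto3}) for the chosen $\sigma^{\prime}(I)$; as $\sigma^{\prime}(I)\in\mathcal{P}$ was arbitrary, $\mathcal{F_{\varepsilon}}$ is the desired approximation of the Pareto frontier. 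For the running time, $|\mathcal{I}|=O(n^{4/\varepsilon})$, each Schrage run is polynomial, and the dominance filtering costs $O(|\mathcal{F_{\varepsilon}}|^{2})$, so the whole procedure is polynomial for fixed $\varepsilon$.

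I expect the main obstacle to be conceptual rather than computational: one must make precise that a single, deadline-independent family $\mathcal{I}$ and a single batch of stored Schrage solutions simultaneously cover every Pareto point, so that the maximality/compatibility machinery of Theorem~\ref{theo1} is re-invoked afresh for each target value $d=C_{\max}(\sigma^{\prime}(I))$ while reusing the same computation. The accompanying monotonicity remark---that transferring a sequence computed on a modified instance back to the original instance increases neither objective because heads and tails are only raised---must be stated carefully, since it is exactly what upgrades the generic guarantee~(\ref{pareto2}) to the stronger form~(\ref{pareto3}).
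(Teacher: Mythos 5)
Your proposal is correct and takes essentially the same route as the paper: the paper's proof likewise fixes a Pareto optimal $\sigma^{\prime}(I)$, sets $d=C_{\max}(\sigma^{\prime}(I))$, invokes the guarantee of Theorem~\ref{theo1} for $PTAS1(I,d,\varepsilon)$, and observes that the resulting sequence is among those computed in Step~1 of PTAS2, since the family $\mathcal{I}$ does not depend on $d$. You merely unfold the maximality/compatibility argument inline and make explicit two points the paper leaves implicit --- the monotonicity of both objectives when a sequence built for a modified instance is applied to $I$, and the fact that the dominance pruning in Step~2 cannot destroy the guarantee.
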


\begin{proof}
The running time of PTAS2 is polynomial because the set $\mathcal{I}$ contains
only a polynomial number of instances. Let $\sigma^{\prime}(I)$ be a Pareto
optimal sequence for problem $\Pi_{2}$ with $C_{\max}(\sigma^{\prime}(I))=d$.
Algorithm $PTAS1(I,d,\varepsilon)$ of Section~3 outputs a sequence $\sigma(I)$
which fulfills both (\ref{pareto1}) and (\ref{pareto3}). Since sequence
$\sigma(I)$ is also found in Step~1 of Algorithm PTAS2, the theorem follows.
\end{proof}

\section{PTAS for the third scenario: $1,h_{1}|r_{j}|L_{\max}$}

The third scenario $1,h_{1}|r_{j}|L_{\max}$, denoted as $\Pi_{3}$, is studied
in this section. The proposed PTAS for this scenario is related to the first
one and it is based on several steps. In the remainder of this section, $I$
denotes a given instance of $\Pi_{3}$ and $\varepsilon$ is the desired
accuracy. Since a 2-approximation for $1,h_{1}|r_{j}|L_{\max}$ has been
established in~\cite{KH}, we consider only $\varepsilon<1$. Without loss of
generality, in any instance $I$ of problem $\Pi_{3}$, we assume that
$r_{j}\notin\lbrack T_{1},T_{2}[$ for $j\in J$ and that if $r_{j}<T_{1}$, then
the inequality $r_{j}+p_{j}\leq T_{1}$ should hold. Otherwise, in both cases,
$r_{j}$ would be set equal to $T_{2}$. Thus, jobs in $J$ can be partitioned in
two disjoint subsets $X$ and $Y$ where $X=\{j\in J|r_{j}+p_{j}\leq T_{1}\}$
and $Y=\{j\in J|r_{j}\geq T_{2}\}$.

Finally, we assume that $1/\varepsilon$ is integer ($1/\varepsilon=f$) and
that every instance $I$ of problem $\Pi_{3}$ respects the conditions expressed
in the following proposition.\ \ 

\begin{proposition}
With no ($1+\varepsilon$)-loss, every instance $I$ of $\Pi_{3}$ contains at
most $f$ different tails from the set $\{\varepsilon\overline{q}%
,2\varepsilon\overline{q},3\varepsilon\overline{q},...,\overline{q}\}$, where
$\overline{q}=\max_{j\in J}\{q_{j}\}$.
\end{proposition}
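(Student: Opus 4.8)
The plan is to show that an arbitrary instance $I$ of $\Pi_3$ can be transformed into one with at most $f$ distinct tail values by \emph{rounding each tail up} to the next multiple of $\varepsilon\overline{q}$, and that this rounding increases the optimal maximum lateness by a factor of at most $(1+\varepsilon)$. Formally, for each job $j$ I would define the modified tail
\begin{equation}
\widehat{q}_j=\left\lceil \frac{q_j}{\varepsilon\overline{q}}\right\rceil\varepsilon\overline{q}, \label{eq:round}
\end{equation}
so that $\widehat{q}_j\in\{\varepsilon\overline{q},2\varepsilon\overline{q},\ldots,\overline{q}\}$ (using that $1/\varepsilon=f$ is an integer and $q_j\le\overline{q}$). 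This yields at most $f$ distinct tails, as required. The processing times, release times and the non-availability interval $]T_1,T_2[$ are all left unchanged, so any sequence that is feasible for $I$ remains feasible for the rounded instance $\widehat{I}$ and vice versa; in particular the partition of $J$ into $X$ and $Y$ is untouched.

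Next I would bound the loss incurred by the rounding. Since $\widehat{q}_j\ge q_j$ for every $j$, any fixed feasible sequence $\sigma$ satisfies $L_{\max}(\sigma,I)\le L_{\max}(\sigma,\widehat{I})$, and applying this to an optimal sequence for $\widehat{I}$ gives $L^{*}_{\max}(I)\le L^{*}_{\max}(\widehat{I})$. For the reverse direction, the key observation is that the ceiling in (\ref{eq:round}) overshoots by strictly less than one step, i.e.
\begin{equation}
\widehat{q}_j<q_j+\varepsilon\overline{q}\qquad\text{for all }j\in J. \label{eq:gap}
\end{equation}
Fix an optimal sequence $\sigma^{*}$ for the original instance $I$ and evaluate it on $\widehat{I}$. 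For whichever job $c$ attains the maximum lateness on $\widehat{I}$ we have
\begin{equation}
L_{\max}(\sigma^{*},\widehat{I})=C_c+\widehat{q}_c<C_c+q_c+\varepsilon\overline{q}\le L^{*}_{\max}(I)+\varepsilon\overline{q}, \label{eq:bound}
\end{equation}
where the last inequality uses that $C_c+q_c$ is the lateness of job $c$ under $\sigma^{*}$ on $I$, hence at most $L^{*}_{\max}(I)$.

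To finish I would convert the additive error $\varepsilon\overline{q}$ into the desired multiplicative $(1+\varepsilon)$-loss by invoking the elementary lower bound $L^{*}_{\max}(I)\ge\overline{q}$: the job realizing the maximum tail $\overline{q}$ has completion time at least its own processing time, so its lateness is at least $\overline{q}$. Substituting $\overline{q}\le L^{*}_{\max}(I)$ into (\ref{eq:bound}) yields
\begin{equation}
L^{*}_{\max}(\widehat{I})\le L_{\max}(\sigma^{*},\widehat{I})< L^{*}_{\max}(I)+\varepsilon L^{*}_{\max}(I)=(1+\varepsilon)L^{*}_{\max}(I), \label{eq:final}
\end{equation}
which together with $L^{*}_{\max}(I)\le L^{*}_{\max}(\widehat{I})$ establishes the claimed $(1+\varepsilon)$-equivalence of the two instances. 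The only subtle point — and the part I would state most carefully — is the reduction from additive to multiplicative error in the final step: it relies entirely on the lower bound $\overline{q}\le L^{*}_{\max}(I)$, so I would make sure to justify that bound explicitly (it is in fact a special case of inequality (\ref{eq66}) applied to the singleton consisting of the job with the largest tail). Everything else is routine, since rounding tails upward preserves feasibility with respect to the machine non-availability interval.
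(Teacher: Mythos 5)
Your proof is correct and follows essentially the same route as the paper: round every tail up to the next multiple of $\varepsilon\overline{q}$, observe that each tail increases by less than $\varepsilon\overline{q}\leq\varepsilon L_{\max}^{\ast}(I)$, and conclude the modified instance loses at most a factor $(1+\varepsilon)$. You merely spell out details the paper leaves implicit (feasibility preservation and the lower bound $\overline{q}\leq L_{\max}^{\ast}(I)$ via inequality~(\ref{eq66})), which is fine.
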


\begin{proof}
We simplify the instance $I$ as follows. Split the interval $[0,\max_{j\in
J}\{q_{j}\}]$ into $1/\varepsilon$ equal length intervals and round up every
tail $q_{j}$ to the next multiple of $\varepsilon\overline{q}$. Clearly, every
tail will not be increased by more than $\varepsilon\overline{q}%
\leq\varepsilon L_{\max}^{\ast}(I)$. Then, we obtain that the modified
instance has an optimal solution of a maximum lateness less than or equal to
$\left(  1+\varepsilon\right)  L_{\max}^{\ast}(I)$.
\end{proof}

Since all the jobs of $Y$ should be scheduled after $T_{2}$ in any feasible
solution of $I$, our PTAS will focus on subset $X$. More precisely, this PTAS
will be based on guessing the disjoint subsets $X_{1}$ and $X_{2}$
($X_{1}\subset X$, $X_{2}\subset X$ and $X_{1}\cup X_{2}=X$) such that $X_{1}$
(respectively $X_{2}$) contains the jobs to be scheduled before $T_{1}$
(respectively after $T_{2}$). In the optimal solution, we will denote the jobs
of $X$ to be scheduled before $T_{1}$ (respectively after $T_{2}$) by
$X_{1}^{\ast}$ (respectively by $X_{2}^{\ast}$). It is clear that if we are
able to guess correctly $X_{1}^{\ast}$ in polynomial time, then it is possible
to construct a PTAS for $\Pi_{3}$. Indeed, we can associate the scheduling of
the jobs in $X_{1}^{\ast}$ before $T_{1}$ with a special instance $I_{1}%
^{\ast}$ of $\Pi_{1}$ where the jobs to be scheduled are those in subset
$X_{1}^{\ast}$ and the deadline is equal to $T_{1}$. On the other hand, the
scheduling of the other jobs in $X_{2}^{\ast}\cup Y$ after $T_{2}$ can be seen
as a special instance $I_{2}^{\ast}$ of problem $\Pi_{0}$, with all release
times greater or equal to $T_{2}$. Consequently,%

\begin{equation}
L_{\max}^{\ast}(I)=\max\{L_{\max}^{\ast}(I_{1}^{\ast}),L_{\max}^{\ast}%
(I_{2}^{\ast})\}
\end{equation}

Clearly, the optimal sequence $\sigma_{1}^{\ast}$ of $I_{1}^{\ast}$ should
satisfy the feasibility condition
\begin{equation}
C_{\max}(\sigma_{1}^{\ast}(I_{1}^{\ast}))\leq T_{1}.
\end{equation}

As a consequence, by applying PTAS1 with $d=T_{1}$ or another existing PTAS
for $\Pi_{0}$ (for example PTAS1 with $d$ set to $\infty$ or the PTAS by Hall
and Shmoys~\cite{Hall}), we would get a PTAS for problem $\Pi_{3}$.
Unfortunately, it is not possible to guess the exact subset $X_{1}^{\ast}$ in
a polynomial time, since the potential number of the candidate subsets $X_{1}$
is exponential. Nevertheless, we will show later that we can guess in a
polynomial time another close/approximate subset $X_{1}^{\#}$ for which the
associated instance $I_{1}^{\#}$ of $\Pi_{1}$ and its optimal solution
$\sigma_{1}^{\#}$ verify the following relations:%

\begin{equation}
L_{\max}(\sigma_{1}^{\#}(I_{1}^{\#}))\leq\left(  1+\varepsilon\right)
L_{\max}(\sigma_{1}^{\ast}(I_{1}^{\ast})),
\end{equation}

\begin{equation}
C_{\max}(\sigma_{1}^{\#}(I_{1}^{\#}))\leq C_{\max}(\sigma_{1}^{\ast}%
(I_{1}^{\ast})).
\end{equation}

As the core of our problem of designing a PTAS for $\Pi_{3}$ is to schedule
correctly the jobs of $X$, our guessing approach will be based on a specific
structure on $X$. In the next paragraph, we will describe such a structure.
Then, we present our PTAS and its proof.

\subsection{Defining a structure for the jobs in $X$}

In this subsection, we will focus on the jobs of $X$ and partition them into
subsets. Recall that $P=\sum_{j=1}^{n}p_{j}$. Set $\delta=\frac{\varepsilon
^{2}P}{4}$. A job $j\in X$ is called a \emph{big} job if $p_{j}>\delta$. Let
$B\subset X$ denote the set of big jobs. Obviously, $|B|<4/\varepsilon^{2}$.
The set $S=X\backslash B$ represents the subset of \emph{small} jobs. We
partition $S$ into $f$ subsets $S(k)$, $1\leq k\leq f$, where jobs in $S(k)$
have identical tails of length $k\varepsilon\overline{q}$.

In the next step, jobs in $S(k)$ are sorted in non-decreasing order of release
times. In other words, jobs in $S(k)$ are encoded by pairs of integers
$(k,1),(k,2),\ldots,(k,|S(k)|)$ such that
\begin{equation}
r_{(k,1)}\leq r_{(k,2)}\leq\ldots\leq r_{(k,|S(k)|)}. \label{release}%
\end{equation}
Set $m(k)=\lceil p(S(k))/\delta\rceil$. Hence, $m(k)$ is in $O(4/\varepsilon
^{2})$. For every $k$ and $z$ with $1\leq k\leq f$ and $1\leq z\leq m(k)$, the
subset $S_{k,z}=\{(k,1),(k,2),...,(k,e(z))\}$ is composed of the $e(z)$ jobs
of $S(k)$ with the smallest release times, such that its total processing time
$p\left(  S_{k,z}\right)  =\sum_{l=1}^{e(z)}p_{(k,l)}$ fulfills
\begin{equation}
(z-1)\delta<p\left(  S_{k,z}\right)  \leq z\delta, \label{ez}%
\end{equation}
and $e(z)$ is the largest integer for which (\ref{ez}) holds. Moreover, define
$S_{k,0}=\emptyset$ and $e(0)=0$. Note that $S_{k,m(k)}=S(k)$.

\subsection{Description of the PTAS for $\Pi_{3}$}

As we mentioned before, our proposed PTAS for problem $\Pi_{3}$ is based on
guessing the subset of jobs $X_{1}\subset X$ to be scheduled before $T_{1}$
($X_{2}=X\backslash X_{1}$ will be scheduled after $T_{2}$). For every guessed
subset $X_{1}\subset X$ in our PTAS, we associate an instance $I_{1}$ of
$\Pi_{1}$ with $d=T_{1}$. To the complementary subset $X_{2}\cup Y$ we
associate an instance $I_{2}$ of $\Pi_{0}$ after setting all the release times
in $X_{2}$ to $T_{2}$. In order to solve the generated instances $I_{2}$, we
use PTAS1 with $d=\infty$. (Alternatively, we could also use the existing PTAS
by Hall and Shmoys for $\Pi_{0}$. More details on this PTAS are available in
\cite{Hall}.) We will abbreviate $PTAS1(I,\infty,\varepsilon)$ by
$PTAS0(I,\varepsilon)$. For instances $I_{1}$ we apply $PTAS1(I_{1}%
,T_{1},3\varepsilon/5)$ and for instances $I_{2}$, we apply $PTAS0(I_{2}%
,\varepsilon/3)$. The guessing of $X_{1}$ and the details of our PTAS are
described in procedure PTAS3.\bigskip\ 

\noindent\textbf{Algorithm $PTAS3(I,T_{1},T_{2},\varepsilon)$}

\begin{description}
\item[\textsc{{Input:}}] An instance $I$ of $n$ jobs, integers $T_{1}, T_{2}$
with $T_{1}\leq T_{2}$ and an accuracy $\varepsilon$.

\item[\textsc{{Output:}}] A sequence $\sigma(I)$ with $L_{\max}(\sigma
(I))\leq(1+\varepsilon)L_{\max}^{\ast}\left(  I\right)  $ respecting the
non-availability interval $]T_{1},T_{2}[$.
\end{description}

\begin{enumerate}
\item The big jobs $B_{1}$ to be included in $X_{1}$ are guessed from $B$ by
complete enumeration. For every guess of $B_{1}$, the jobs $B\backslash B_{1}$
will be a part of $X_{2}$.

\item The small jobs to be included in $X_{1}$ are guessed for each subset
$S(k)$, $1\leq k\leq f$, by choosing one of the $m(k)+1$ possible subsets
$S_{k,z}$, $0\leq z\leq m(k)$. Thus, $X_{1}=B_{1}\cup_{k=1}^{f}S_{k,z}$ and
$X_{2}=\left(  B\backslash B_{1}\right)  \cup_{k=1}^{f}\left(  S(k)\backslash
S_{k,z}\right)  $. Set all the release times of jobs in $X_{2}$ to $T_{2}$.

\item With every guessed subset $X_{1}$ in Step~2, we associate an instance
$I_{1}$ of $\Pi_{1}$ with $d=T_{1}$. With the complementary subset $X_{2}\cup
Y$ we associate an instance $I_{2}$ of $\Pi_{0}$. Apply $PTAS1(I_{1}
,T_{1},3\varepsilon/5)$ for solving instance $I_{1}$ of $\Pi_{1}$ and
$PTAS0(I_{2},\varepsilon/3)$ for solving instance $I_{2}$ of $\Pi_{0}$. Let
$\sigma_{1}$\ and $\sigma_{2}$ be the obtained schedules from $PTAS1(I_{1}
,T_{1},3\varepsilon/5)$ and $PTAS0(I_{2},\varepsilon/3)$, respectively. If
they are feasible, then merge $\sigma_{1}$ and $\sigma_{2}$ to get $\sigma$,
which represents a feasible solution for instance $I$ of $\Pi_{3}$.

\item Return $\sigma(I)$, the best feasible schedule from all the complete
merged schedules for $I$ in Step~3.
\end{enumerate}

\subsection{Proof of the PTAS for $\Pi_{3}$}

Denote the jobs from $S(k)$ which are scheduled before $T_{1}$ in the optimal
solution by $S_{k,\ast}$. If $0\leq p\left(  S_{k,\ast}\right)  <p(S_{k}%
)=p\left(  S_{k,m(k)}\right)  ,$then there is an integer $z_{k}\in
\{0,1,\ldots,m(k)-1\}$ such that
\begin{equation}
z_{k}\delta\leq p\left(  S_{k,\ast}\right)  <(z_{k}+1)\delta. \label{ez*}%
\end{equation}

We will see in the following lemma that we can get a sufficient approximation
by taking the jobs of $S_{k,z_{k}}$, $z_{k}=0,1,\ldots, m(k)$, instead of
those in $S_{k,\ast}$. In particular, we will see that it is possible to
schedule heuristically before $T_{1}$ the jobs in $S_{k,z_{k}}$ instead of
those in $S_{k,\ast}$ without altering the optimal performance too much. By
selecting jobs with minimal release times we maintain the feasibility of
inserting these jobs before $T_{1}$ instead of $S_{k,\ast}$. Inequalities
(\ref{ez}) and (\ref{ez*}) imply that there is an integer $z_{k}%
\in\{0,1,\ldots,m(k)-1\}$ such that%

\begin{equation}
p\left(  S_{k,z_{k}}\right)  \leq p\left(  S_{k,\ast}\right)  \leq p\left(
S_{k,z_{k}}\right)  +2\delta\label{ecart}%
\end{equation}
holds for $0\leq p\left(  S_{k,\ast}\right)  <p\left(  S_{k}\right)  $.

Note that if $S_{k,\ast}=S(k)$ holds, the guessing procedure ensures that for
$z_{k}=m(k)$ we get $S_{k,m(k)}=S(k)$.

Recall that $X_{1}^{\ast}$ (respectively $X_{2}^{\ast}\cup Y$) is the subset
of jobs to be scheduled before $T_{1}$ (respectively after $T_{2}$) in the
optimal schedule $\sigma^{\ast}(I)$ \ of a given instance $I$ of $\Pi_{3}$.
Let $I_{1}^{\ast}$, $I_{2}^{\ast}$ denote the corresponding instances of
problems $\Pi_{1}$ and $\Pi_{0}$, and $\sigma_{1}^{\ast}(I_{1}^{\ast})$,
$\sigma_{2}^{\ast}(I_{2}^{\ast})$ the associated optimal schedules,
respectively. Then, we get.

\begin{lemma}%
\index{lemma|see{lemma}}%
There is a subset $X_{1}^{\#}\subset X$, generated in Step~2 of PTAS3, for
which we can construct two feasible schedules $\sigma_{1}^{\#}(I_{1}^{\#})$
and $\sigma_{2}^{\#}(I_{2}^{\#})$ such that:
\begin{equation}
L_{\max}(\sigma_{1}^{\#}(I_{1}^{\#}))\leq L_{\max}(\sigma_{1}^{\ast}%
(I_{1}^{\ast}))+f\delta, \label{xx}%
\end{equation}

\begin{equation}
C_{\max}(\sigma_{1}^{\#}(I_{1}^{\#}))\leq C_{\max}(\sigma_{1}^{\ast}%
(I_{1}^{\ast})), \label{zz}%
\end{equation}

\begin{equation}
L_{\max}(\sigma_{2}^{\#}(I_{2}^{\#}))\leq L_{\max}(\sigma_{2}^{\ast}%
(I_{2}^{\ast}))+2f\delta, \label{yy}%
\end{equation}

where $I_{1}^{\#}$ is an instance of $\Pi_{1}$ where we have to schedule the
jobs of $X_{1}^{\#}$ before $T_{1}$ and $I_{2}^{\#}$ is an instance of
$\Pi_{0}$ where we have to schedule the jobs of $(X\backslash X_{1}^{\#})\cup
Y$ after $T_{2}$.
\end{lemma}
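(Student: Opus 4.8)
The plan is to promote the \emph{optimal} split $(X_1^{\ast},X_2^{\ast})$ to one of the guesses that PTAS3 actually enumerates, by rounding each small-job class to a canonical prefix, and then to exhibit the two witness schedules $\sigma_1^{\#}$ and $\sigma_2^{\#}$ explicitly. First I would \emph{define} $X_1^{\#}$. Set $B_1=B\cap X_1^{\ast}$, which is one of the subsets of $B$ tried in Step~1, and for each class $k$ let $z_k$ be the index furnished by (\ref{ez*}); put $X_1^{\#}=B_1\cup\bigcup_{k=1}^{f}S_{k,z_k}$. Choosing $z=z_k$ in class $k$ shows that $X_1^{\#}$ is exactly one of the sets produced in Step~2, so it is a legitimate guess. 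The big jobs placed before $T_1$ (resp. after $T_2$) are identical in the two partitions, so only the small jobs have to be compared, and (\ref{ecart}) controls them class by class: $0\le p(S_{k,\ast})-p(S_{k,z_k})<2\delta$.

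Next I would construct $\sigma_1^{\#}$ and read off (\ref{zz}) and (\ref{xx}). Starting from the optimal schedule $\sigma_1^{\ast}$, I keep the big jobs $B_1$ at their $\sigma_1^{\ast}$ positions and, class by class, delete the jobs of $S_{k,\ast}$ and insert those of $S_{k,z_k}$ (in release-time order) into the freed space. Two facts make this feasible before $T_1$: since $p(S_{k,z_k})\le p(S_{k,\ast})$ the total work before $T_1$ does not grow, which gives $C_{\max}(\sigma_1^{\#})\le C_{\max}(\sigma_1^{\ast})\le T_1$, i.e. both deadline-feasibility and (\ref{zz}); and because $S_{k,z_k}$ is the prefix of $S(k)$ with the \emph{smallest} release times (cf.\ (\ref{release})), every inserted job has release time no larger than any deleted job of its class, hence is available wherever the deleted job sat. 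As each class is replaced by a lighter, earlier-released set with the \emph{same} tail $k\varepsilon\overline{q}$, one checks that no class completion time increases, so $L_{\max}(\sigma_1^{\#})\le L_{\max}(\sigma_1^{\ast})$ and the additive margin $f\delta$ in (\ref{xx}) is comfortable.

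The genuinely quantitative step is $\sigma_2^{\#}$ and (\ref{yy}). The instances $I_2^{\#}$ and $I_2^{\ast}$ agree on $Y$ and on the big jobs $B\setminus B_1$, and in class $k$ they schedule $S(k)\setminus S_{k,z_k}$ versus $S(k)\setminus S_{k,\ast}$ after $T_2$; by (\ref{ecart}) the former is heavier by $p(S_{k,\ast})-p(S_{k,z_k})<2\delta$. Crucially, Step~2 resets all these release times to $T_2$ and every such job carries the tail $k\varepsilon\overline{q}$, so within a class the jobs are completely interchangeable. I would therefore take $\sigma_2^{\ast}$, swap each class-$k$ subset for the one required by $I_2^{\#}$, and absorb the extra $<2\delta$ of class-$k$ work. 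Since the only effect is at most $<2\delta$ additional processing per class, all available at $T_2$ and with unchanged tails, every completion time in the resulting feasible schedule grows by at most the cumulative added work $\sum_{k}\bigl(p(S_{k,\ast})-p(S_{k,z_k})\bigr)<2f\delta$, which yields (\ref{yy}).

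The main obstacle is making the class-wise interchange rigorous rather than its arithmetic. Before $T_1$ the sets $S_{k,\ast}$ and $S_{k,z_k}$ need not be nested, so the feasibility of $\sigma_1^{\#}$ rests entirely on the smallest-release-times property of the canonical prefix; after $T_2$ the real loss of $2f\delta$ in (\ref{yy}) must be kept \emph{additive} over the $f$ classes, which is precisely why all rerouted jobs are assigned the common release date $T_2$ and why the tail rounding of the preceding Proposition restricts us to $f$ classes. Everything else is processing-time bookkeeping driven by (\ref{ecart}).
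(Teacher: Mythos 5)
Your choice of $X_{1}^{\#}=B_{1}^{\ast}\cup\bigcup_{k=1}^{f}S_{k,z_{k}}$ and your treatment of the post-$T_{2}$ schedule (the swap argument for (\ref{yy}), exploiting the common release date $T_{2}$ and common tails within a class) essentially coincide with the paper's proof. The genuine gap is in your construction of $\sigma_{1}^{\#}$ before $T_{1}$. You claim the jobs of $S_{k,z_{k}}$ can be inserted ``into the freed space'' left by deleting $S_{k,\ast}$ so that \emph{no} completion time increases, i.e. $L_{\max}(\sigma_{1}^{\#}(I_{1}^{\#}))\leq L_{\max}(\sigma_{1}^{\ast}(I_{1}^{\ast}))$, with the $f\delta$ in (\ref{xx}) a mere ``comfortable margin.'' This repacking is not possible in general: the schedule is non-preemptive and the individual processing times in $S_{k,z_{k}}$ differ from those in $S_{k,\ast}$, so when you fill a maximal freed interval in release order, the last job placed there typically does not finish exactly at the interval's right endpoint. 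Either it overflows the interval --- colliding with, and hence delaying, the jobs of other classes or big jobs scheduled immediately afterwards --- or you stop early and waste capacity, and the wasted capacity can accumulate so that the remaining prefix jobs no longer fit before $T_{1}$ at all. The paper's proof exists precisely to handle this: procedure \emph{Schedule}$^{\#}$ lets each interval overload by $\alpha_{u}\leq\delta$ (at most one small job), shifts only the intervening intervals to the right by that amount, and shows via the first inequality of (\ref{ecart}) that the overload never reaches the last interval $G_{\gamma_{k},k}$ of a class. The per-class delay of $\delta$, summed over the $f$ classes, is exactly where the $f\delta$ in (\ref{xx}) comes from; it is the actual cost of the construction, not slack on top of a zero-delay argument.

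Two smaller inaccuracies feed this gap. First, ``every inserted job has release time no larger than any deleted job of its class'' is false as stated: take $S(k)=\{j_{1},j_{2},j_{3}\}$ with $r_{1}<r_{2}<r_{3}$, $S_{k,\ast}=\{j_{1},j_{3}\}$ and $S_{k,z_{k}}=\{j_{1},j_{2}\}$; the inserted job $j_{2}$ has a larger release time than the deleted job $j_{1}$. What holds, and what the packing actually needs, is the order-statistic comparison --- the $i$-th smallest release time in $S_{k,z_{k}}$ is at most the $i$-th smallest in $S_{k,\ast}$ --- combined with inserting in the order (\ref{release}), which is how the paper rules out idle time inside the intervals. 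Second, your justification of (\ref{zz}), ``the total work before $T_{1}$ does not grow, hence the makespan does not grow,'' is not a valid implication in a non-preemptive schedule with release dates, where forced idle time matters; the paper instead obtains (\ref{zz}) from the structural fact that the right endpoint of the final interval of each class is never shifted.
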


\begin{proof}
See Appendix 1.
\end{proof}

\begin{theorem}
\label{theononavailability} Algorithm PTAS3 is a PTAS for problem $\Pi_{3}$.
\end{theorem}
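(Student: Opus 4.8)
The plan is to combine the structural Lemma with the approximation guarantee of Theorem~\ref{theo1} applied to the two subinstances, and then to check that the chosen accuracies $3\varepsilon/5$ and $\varepsilon/3$, together with the additive errors $f\delta$ and $2f\delta$ coming from the Lemma, add up to a multiplicative factor of at most $1+\varepsilon$. First I would fix an optimal schedule $\sigma^{\ast}(I)$ and its induced partition $X_{1}^{\ast}, X_{2}^{\ast}$, yielding the subinstances $I_{1}^{\ast}$ (of $\Pi_{1}$) and $I_{2}^{\ast}$ (of $\Pi_{0}$) with $L_{\max}^{\ast}(I)=\max\{L_{\max}(\sigma_{1}^{\ast}(I_{1}^{\ast})),L_{\max}(\sigma_{2}^{\ast}(I_{2}^{\ast}))\}$. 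The Lemma supplies a guess $X_{1}^{\#}$ that is actually enumerated in Step~2 of PTAS3, together with feasible schedules $\sigma_{1}^{\#}(I_{1}^{\#})$ and $\sigma_{2}^{\#}(I_{2}^{\#})$ satisfying (\ref{xx})--(\ref{yy}).

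The key numerical ingredient is the lower bound $L_{\max}^{\ast}(I)\geq P$: in any feasible schedule the machine processes $P$ units of work, so $C_{\max}\geq P$ and hence $L_{\max}\geq C_{\max}\geq P$. Since $\delta=\varepsilon^{2}P/4$ and $f=1/\varepsilon$, this converts the additive errors into multiplicative ones, namely $f\delta=\varepsilon P/4\leq(\varepsilon/4)L_{\max}^{\ast}(I)$ and $2f\delta=\varepsilon P/2\leq(\varepsilon/2)L_{\max}^{\ast}(I)$.

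Next I would bound the two schedules returned by the inner schemes for the guess $X_{1}^{\#}$. For $I_{1}^{\#}$, feasibility of $\sigma_{1}^{\#}(I_{1}^{\#})$, which follows from (\ref{zz}) together with $C_{\max}(\sigma_{1}^{\ast}(I_{1}^{\ast}))\leq T_{1}$, shows that the optimum of $I_{1}^{\#}$ under the deadline $T_{1}$ is at most $L_{\max}(\sigma_{1}^{\#}(I_{1}^{\#}))$. Applying Theorem~\ref{theo1} to $PTAS1(I_{1}^{\#},T_{1},3\varepsilon/5)$, using (\ref{xx}) and $L_{\max}(\sigma_{1}^{\ast}(I_{1}^{\ast}))\leq L_{\max}^{\ast}(I)$, gives
\[
L_{\max}(\sigma_{1})\leq(1+3\varepsilon/5)\bigl(L_{\max}^{\ast}(I)+(\varepsilon/4)L_{\max}^{\ast}(I)\bigr)=(1+3\varepsilon/5)(1+\varepsilon/4)L_{\max}^{\ast}(I).
\]
Similarly, for $I_{2}^{\#}$, using (\ref{yy}) and Theorem~\ref{theo1} applied to $PTAS0(I_{2}^{\#},\varepsilon/3)$ yields
\[
L_{\max}(\sigma_{2})\leq(1+\varepsilon/3)(1+\varepsilon/2)L_{\max}^{\ast}(I).
\]
A short computation shows that for $\varepsilon<1$ both products are at most $1+\varepsilon$ (the cross terms $3\varepsilon^{2}/20$ and $\varepsilon^{2}/6$ are absorbed using $\varepsilon^{2}<\varepsilon$). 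Since the merged schedule respects $]T_{1},T_{2}[$ and has lateness $\max\{L_{\max}(\sigma_{1}),L_{\max}(\sigma_{2})\}$, the guess $X_{1}^{\#}$ already produces a $(1+\varepsilon)$-approximation, and Step~4 returns a schedule no worse.

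Finally, for the running time I would count the guesses: Step~1 enumerates at most $2^{|B|}<2^{4/\varepsilon^{2}}$ subsets of big jobs, and Step~2 chooses, for each of the $f=1/\varepsilon$ classes $S(k)$, one of $m(k)+1=O(4/\varepsilon^{2})$ nested subsets, for a total of $O\bigl(2^{4/\varepsilon^{2}}(4/\varepsilon^{2}+1)^{1/\varepsilon}\bigr)$ guesses, which is a constant for fixed $\varepsilon$. Each guess invokes $PTAS1$ and $PTAS0$, both polynomial by Theorem~\ref{theo1}, so PTAS3 runs in polynomial time for fixed $\varepsilon$. The main obstacle is not any single step but the bookkeeping that keeps the chosen accuracies $3\varepsilon/5$ and $\varepsilon/3$, combined with the additive slacks of the Lemma, within the target factor $1+\varepsilon$; the bound $L_{\max}^{\ast}(I)\geq P$ is exactly what makes the additive terms affordable.
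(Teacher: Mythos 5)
Your proposal is correct and follows essentially the same route as the paper's proof: invoke the structural Lemma to get the enumerated guess $X_{1}^{\#}$ with properties (\ref{xx})--(\ref{yy}), apply Theorem~\ref{theo1} to the two subinstances with accuracies $3\varepsilon/5$ and $\varepsilon/3$, convert the additive slacks via $f\delta=\varepsilon P/4\leq\varepsilon L_{\max}^{\ast}(I)/4$, and count the $O\bigl(2^{4/\varepsilon^{2}}(4/\varepsilon^{2})^{1/\varepsilon}\bigr)$ guesses for the running time. The only cosmetic difference is that you justify the bound $L_{\max}^{\ast}(I)\geq P$ explicitly (the paper uses it silently) and you multiply out $(1+3\varepsilon/5)(1+\varepsilon/4)$ and $(1+\varepsilon/3)(1+\varepsilon/2)$ rather than keeping the paper's additive form, but the arithmetic closes in both versions for $\varepsilon\leq 1$.
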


\begin{proof}
See Appendix 2.
\end{proof}

\section{PTAS for the fourth scenario: $1,ONA|r_{j}|L_{\max}$}

The studied scenario $\Pi_{4}$ can be formulated as follows. An operator has
to schedule a set $J$ of $n$ jobs on a single machine, where every job $j$ has
a processing time $p_{j}$, a release date $r_{j}$ and a tail $q_{j}$. The
machine can process at most one job at a time if the operator must be
available at the starting time and the completion time of such a job. The
operator is unavailable during $]T_{1},T_{2}[$\textbf{.} Preemption of jobs is
not allowed (jobs have to be performed under the non-resumable scenario). As
in the previous scenarios, we aim to minimize the maximum lateness.

If for every $j\in J$ we have $p_{j}<T_{2}-T_{1}$ or $r_{j}\geqslant T_{2}$,
it can be easily checked that $\Pi_{4}$ is equivalent to an instance of
$\Pi_{3}$ for which we can apply the PTAS described in the previous section.
Hence, in the remainder we assume that there are some jobs which have
processing times greater than $T_{2}-T_{1}$ and can complete before $T_{1}$.
Let $\mathcal{K}$ denote the subset of these jobs. Similarly to the
transformation proposed in \cite{Seifa}, the following theorem on the
existence of the PTAS for Scenario $\Pi_{4}$ can be proved.

\begin{theorem}
Scenario $\Pi_{4}$ admits a PTAS.
\end{theorem}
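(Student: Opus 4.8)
The plan is to reduce $\Pi_4$ to the machine non-availability problem $\Pi_3$, for which Theorem~\ref{theononavailability} already provides a PTAS, by isolating the single extra degree of freedom that the operator model introduces. The structural fact underlying the whole argument is that in any feasible schedule \emph{at most one} job can be in process throughout the open interval $]T_1,T_2[$: since no job may start or complete inside $]T_1,T_2[$ and jobs do not overlap, a job occupying any point of this interval must satisfy $s_j\leq T_1$ and $C_j\geq T_2$, hence $p_j>T_2-T_1$, so it lies in $\mathcal{K}$, and two such jobs would overlap. Consequently every feasible schedule is of one of two types: either no job straddles $]T_1,T_2[$, in which case the schedule is feasible for $\Pi_3$ and is produced by PTAS3, or exactly one \emph{straddling} job $k\in\mathcal{K}$ crosses it. The algorithm therefore tries the first type with PTAS3 and, for the second type, enumerates the straddling job $k$ over $\mathcal{K}$; this outer loop costs only the factor $|\mathcal{K}|+1\leq n+1$.

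Once $k$ is fixed, I would treat it as the pivot separating a \emph{before} part and an \emph{after} part and otherwise reuse the machinery of PTAS3, as in the transformation of~\cite{Seifa}. The jobs sequenced before $k$ must complete by $s_k\leq T_1$, so they form an instance $I_1$ of $\Pi_1$ with deadline $T_1$, which I would solve with $PTAS1(I_1,T_1,\cdot)$; the jobs sequenced after $k$, together with $k$ itself and with $Y$, form an instance $I_2$ of $\Pi_0$ solved with $PTAS0$. The partition of $X\setminus\{k\}$ into a before-set and an after-set is guessed exactly as in Steps~1--2 of PTAS3 (big jobs by complete enumeration, small jobs through the subsets $S_{k,z}$), so that by~(\ref{ecart}) the guessed before-set differs from the optimal one by only $O(f\delta)$ in total processing time. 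Once the before-schedule $\sigma_1$ is computed, $k$ is placed at the earliest feasible straddling start $s_k=\max\{r_k,\,T_2-p_k,\,C_{\max}(\sigma_1)\}$; since $k\in\mathcal{K}$ this gives $s_k\leq T_1$ and $C_k=s_k+p_k\geq T_2$, and $C_k$ becomes the common release date of the after-part before the two schedules are merged.

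The error analysis then composes three contributions exactly in the manner of the proof of Theorem~\ref{theononavailability}. The essential point is to control the completion time $C_k$ of the pivot \emph{without} guessing its continuous start time: because PTAS1 returns a schedule whose makespan does not exceed that of the target sequence, i.e. inequality~(\ref{pareto3}) (used as~(\ref{zz}) in PTAS3), the before-schedule satisfies $C_{\max}(\sigma_1)\leq C_{\max}(\sigma_1^{\ast})$ and hence $s_k\leq s_k^{\ast}$ and $C_k\leq C_k^{\ast}$, where $\sigma_1^{\ast}$ denotes the before-part of the optimum. This simultaneously bounds the lateness $C_k+q_k$ of the pivot by its optimal value and guarantees that the after-part starts no later than in the optimum, so that $PTAS0$ applied to $I_2$ loses only the multiplicative factor $1+\varepsilon/3$. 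Splitting the accuracy as PTAS3 does ($3\varepsilon/5$ for $PTAS1$ and $\varepsilon/3$ for $PTAS0$) and absorbing the additive $O(f\delta)$ guessing error into $\varepsilon L_{\max}^{\ast}(I)$ as in~(\ref{xx})--(\ref{yy}) yields a merged schedule with $L_{\max}\leq(1+\varepsilon)L_{\max}^{\ast}(I)$. Since the enumeration over $k$ multiplies the polynomial running time of PTAS3 by at most $n+1$, the whole procedure is polynomial for fixed $\varepsilon$, and taking the best schedule over the no-straddle case and all choices of $k$ proves the theorem.

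The step I expect to be the main obstacle is precisely the treatment of the pivot: one must argue that forcing a single guessed job to straddle, while every other job lies entirely before $s_k$ or entirely after $C_k$, loses nothing against the optimum, and that the continuous freedom in $s_k$ is eliminated by appealing to the makespan guarantee~(\ref{pareto3}) of PTAS1 rather than by an impossible exhaustive guess. Everything beyond this --- the ``at most one straddling job'' observation and the composition of the three error terms --- is bookkeeping closely parallel to the proofs of Theorems~\ref{theo1} and~\ref{theononavailability}.
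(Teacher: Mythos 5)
Your overall architecture --- at most one job can straddle $]T_1,T_2[$, enumerate it over $\mathcal{K}$, and otherwise fall back on PTAS3 --- agrees with the paper, but your treatment of the pivot's start time is genuinely different from the paper's, and it is exactly there that your proof has a gap. You place the pivot at $s_k=\max\{r_k,\,T_2-p_k,\,C_{\max}(\sigma_1)\}$ and need $C_{\max}(\sigma_1)\leq C_{\max}(\sigma_1^{\ast})$, which you justify by a ``makespan guarantee'' of PTAS1. PTAS1 has no such guarantee: it returns, among the Schrage schedules of the modified instances, the one with smallest $L_{\max}$ subject only to $C_{\max}\leq d$, so with $d=T_1$ all you may conclude is $C_{\max}(\sigma_1)\leq T_1$. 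Inequality (\ref{zz}) is an existence statement about the schedule $\sigma_1^{\#}$ constructed in the proof of Lemma~4, not a property of what PTAS1 outputs, and (\ref{pareto3}) is a property of the whole Pareto set returned by PTAS2, in which the element realizing it for a given target sequence is not identified in advance. Since the lateness-minimizing candidate may well have larger makespan than the makespan-minimizing one (orders that avoid idle time need not be orders that favor large tails), your rule can start the pivot up to $T_1-(T_2-p_k)<p_k$ later than the optimum does, and this delay is inherited by the pivot and by the entire after-part. As $p_k$ can be of the order of $L_{\max}^{\ast}(I)$, the argument as written yields only a constant-factor guarantee, not $1+\varepsilon$.

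The paper closes precisely this hole by discretization rather than by a makespan argument: having guessed the straddling job $s$, it places $\lceil 1/\varepsilon\rceil+1$ equidistant candidate start times $t_s^{h}$ in $[T_2-p_s,T_1]$, observes that rounding the optimal start $t_s^{\ast}$ up to the next grid point delays $s$ and all later jobs by at most $(T_1-T_2+p_s)/\lceil 1/\varepsilon\rceil\leq\varepsilon p_s\leq\varepsilon L_{\max}^{\ast}(I)$, and notes that once the start of $s$ is frozen, its processing interval merged with $]T_1,T_2[$ becomes an ordinary machine non-availability interval, so each candidate instance is a $\Pi_3$ instance on $J\setminus\{s\}$ that PTAS3 (Theorem~\ref{theononavailability}) solves as a black box; no re-derivation of the partition-guessing machinery is needed. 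If you prefer to keep your greedy placement of the pivot, the proof can be repaired by replacing the single output of PTAS1 on the before-instance with the full non-dominated family $\mathcal{F}_{\varepsilon}$ of Theorem~\ref{theo2}, merging every member with the pivot and the after-part, and returning the best merged schedule: Theorem~\ref{theo2}, combined with the existence statements (\ref{xx}) and (\ref{zz}), guarantees that some member simultaneously satisfies $L_{\max}\leq(1+3\varepsilon/5)\left(L_{\max}(\sigma_1^{\ast})+f\delta\right)$ and $C_{\max}\leq C_{\max}(\sigma_1^{\ast})$, which is what your error analysis actually needs. As written, however, the proof does not go through.
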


\begin{proof}
We distinguish two subcases:

\begin{itemize}
\item Subcase 1: there exists a job $s\in\mathcal{K}$, called
\textit{straddling job}, such that in the optimal solution it starts not later
than $T_{1}$ and completes not before $T_{2}$.

\item Subcase 2: there is no straddling job in the optimal solution. \ 
\end{itemize}

It is obvious that Subcase 2 is equivalent to an instance of $\Pi_{3}$ for
which we have established the existence of a PTAS. Thus, the last step
necessary to prove the existence of a PTAS for an instance of $\Pi_{4}$ is to
construct a special scheme for Subcase 1. Without loss of generality, we
assume that Subcase~1 holds and that the straddling job $s$ is known (indeed,
it can be guessed in $O(n)$ time among the jobs of $\mathcal{K}$). Moreover,
the time-window of the starting time $t_{s}^{\ast}$ of this job $s$ in the
optimal solution fulfills
\[
t_{s}^{\ast}\in\lbrack T_{2}-p_{s},T_{1}].\label{eqTs}%
\]

%The straddling job $s$ is assumed to be known. Given an arbitrary
%$\varepsilon>0$, we divide the interval $[T_{2}-p_{s},T_{1}]$ in $\left\lceil
%1/\varepsilon\right\rceil $ equal-length sub-intervals $\bigcup\limits_{h=1}%
%^{\left\lceil 1/\varepsilon\right\rceil }D_{h}$ where
%\[
%D_{h}=[T_{2}-p_{s}+(h-1)\frac{T_{1}-T_{2}+p_{s}}{\left\lceil 1/\varepsilon
%\right\rceil },T_{2}-p_{s}+h\frac{T_{1}-T_{2}+p_{s}}{\left\lceil
%1/\varepsilon\right\rceil }].
%\]

If $T_{2}-p_{s}= T_{1}$, i.e., the interval $[T_{2}-p_{s},T_{1}]$ consists of
a single point $T_{1}$, we can apply $PTAS3(I,T_{1},T_{1},\varepsilon)$.
Otherwise, set
\[
t_{s}^{h}=T_{2}-p_{s}+h\frac{T_{1}-T_{2}+p_{s}}{\left\lceil 1/\varepsilon
\right\rceil }\quad h=0,1,\ldots, \left\lceil 1/\varepsilon\right\rceil .
\]
We consider a set of $\left\lceil 1/\varepsilon\right\rceil +1$ instances
$\{\mathcal{I}_{0},\mathcal{I}_{1},\ldots,\mathcal{I}_{\left\lceil
1/\varepsilon\right\rceil }\}$ of $\Pi_{4}$ where in $\mathcal{I}_{h}$ the
straddling job $s$ starts at time $t_{s}^{h}$. Each such instance is
equivalent to an instance of $\Pi_{3}$ with a set of jobs $J-\{s\}$ and a
machine non-availability interval $\Delta_{h}$ with
\[
\Delta_{h}=\left]  T_{2}-p_{s}+h\frac{T_{1}-T_{2}+p_{s}}{\left\lceil
1/\varepsilon\right\rceil },T_{2}+h\frac{T_{1}-T_{2}+p_{s}}{\left\lceil
1/\varepsilon\right\rceil }\right[  .
\]
For every instance from $\{\mathcal{I}_{0},\mathcal{I}_{1},\ldots
,\mathcal{I}_{\left\lceil 1/\varepsilon\right\rceil } \}$, we apply PTAS3 and
select the best solution among all the $\left\lceil 1/\varepsilon\right\rceil
+1$ instances.

If $t_{s}^{\ast} = T_{2}-p_{s}$, PTAS3 applied to $\mathcal{I}_{0}$ is the
right choice. In all other cases, there is an $h\in\{0,1,\ldots,
\lceil1/\varepsilon\rceil\}$ with $t_{s}^{\ast}\in]t_{s}^{h},t_{s}^{h+1}]$.
Delaying $s$ and the next jobs in the optimal schedule of $\mathcal{I}_{h+1}$,
$h=0,1,\ldots,\left\lceil 1/\varepsilon\right\rceil -1$, will not cost more
than
\[
\frac{T_{1}-T_{2}+p_{s}}{\left\lceil 1/\varepsilon\right\rceil }
\leq\varepsilon\left(  T_{1}-T_{2}+p_{s}\right)  \leq\varepsilon p_{s}.
\]
Thus, the solution $\Omega_{h}$ obtained by PTAS3 for $\mathcal{I}_{h}$,
$h=1,2,\ldots,\left\lceil 1/\varepsilon\right\rceil $) is sufficiently close
to an optimal schedule for Subcase 1 if $s$ is the straddling job and
$t_{s}^{\ast}\in] t_{s}^{h},t_{s}^{h+1}]$. As a conclusion, Subcase~1 has a PTAS.
\end{proof}

\section{Conclusions}

In this paper, we considered important single-machine scheduling problems
under release times and tails assumptions, with the aim of minimizing the
maximum lateness. Four variants have been studied under different scenarios.
In the first scenario, all the jobs are constrained by a common deadline. The
second scenario consists in finding the Pareto front where the considered
criteria are the makespan and the maximum lateness. In the third scenario, a
non-availability interval constraint is considered (breakdown period, fixed
job or a planned maintenance duration). In the fourth scenario, the
non-availibility interval is related to the operator who is organizing the
execution of jobs on the machine, which implies that no job can start, and
neither can complete during the operator non-availability period. For each of
these four open problems, we establish the existence of a PTAS.

As a perspective, the study of min-sum scheduling problems in a similar
context seems to be a challenging subject.

%\small

%\bibitem {CARLIER}Carlier, J., 1982. The one-machine sequencing problem.
%\textit{European Journal of Operational Research} 11, 42-47.

\bigskip

\textbf{APPENDEX 1: Proof of Lemma 4}

\begin{proof}
Let $B_{1}^{\ast}$ be the subset of big jobs contained in $X_{1}^{\ast}$. The
other jobs of $X_{1}^{\ast}$ are small and belong to $\cup_{k=1}^{f}S(k)$.
Hence,
\[
X_{1}^{\ast}=B_{1}^{\ast}\cup\bigcup\limits_{k=1}^{f}S_{k,\ast}%
\]
and $X_{1}^{\#}$ will have the following structure:
\[
X_{1}^{\#}=B_{1}^{\ast}\cup\bigcup\limits_{k=1}^{f}S_{k,z_{k}}.\label{guess}%
\]
Obviously, subset $B_{1}^{\ast}$ can be guessed correctly in Step~2 of PTAS3
since PTAS3 considers all possible subsets of $B$. Therefore, for proving that
subset $X_{1}^{\#}$ is a good guess, we need to prove that the guessing of
each subset $S_{k,z_{k}}$ is close enough to $S_{k,\ast}$, but with smaller
total processing time.

First, we observe that if $S_{k,\ast}=\varnothing$ or $S_{k,\ast}=S(k)$, then
the guessing is perfect, i.e. $S_{k,\ast}=S_{k,z}$.
%Thus, let us assume in the following that $0<p\left(  S_{k,\ast}\right) <
%p(S(k))$ and $z_{k}\in\{0,1,\ldots ,m(k)-1\}$.

If $S_{k,\ast}\not =\varnothing$, we can determine intervals where only jobs
of $S_{k,\ast}$ are scheduled in the optimal schedule $\sigma_{1}^{\ast}%
(I_{1}^{\ast})$ and which contain no idle time. Denote those intervals by
\[
G_{1,k} = [a_{1,k}^{\ast},b_{1,k}^{\ast}], G_{2,k} = [a_{2,k}^{\ast}%
,b_{2,k}^{\ast}], \ldots, G_{\gamma_{k},k} = [a_{\gamma_{k},k}^{\ast
},b_{\gamma_{k},k }^{\ast}],\quad k=1,\ldots,f,
\]
where $\gamma_{k}$ is a certain finite integer. Set
\[
G_{k} =\bigcup_{u=1}^{\gamma_{k}}G_{u,k}, \quad k=1,\ldots,f.
\]
Otherwise, in case $S_{k,\ast}=\varnothing$, we set $G_{k}=\emptyset$.
Finally, define
\[
G =\bigcup_{k=1}^{f}G_{k}.
\]

In addition, we recall that $S_{k,z_{k}}=\{(k,1),(k,2),...,(k,e(z_{k}))\}$. In
our feasible schedule $\sigma_{1}^{\#}(I_{1}^{\#})$, we will schedule the jobs
of this subset $S_{k,z_{k}}$ according to the following heuristic procedure
\emph{Schedule}$^{\mathbf{\#}}$. It assigns the jobs to the intervals
$G_{u,k}$ but throughout the procedure starting and finish times of these
intervals may be changed (e.g. when there is some overload of jobs) or they
are simply shifted to the right. \bigskip
%\newpage

\textbf{Procedure }\emph{Schedule}$^{\mathbf{\#}}$

\begin{enumerate}
\item \textbf{For} $k=1$ to $f$ do:

\item $\alpha_{0}:=0;$

\item \textbf{For} $u=1$ to $u=\gamma_{k}$ do:

\begin{enumerate}
\item Process remaining jobs from $S_{k,z_{k}}$ (in non-decreasing order of
release dates) starting at time $a_{u,k}^{\ast}+\alpha_{u-1}$ until the first
job finishes at time $t\geq b_{u,k}^{\ast}$ or all jobs from $S_{k,z_{k}}$ are
assigned. Let job $(k,l)$ be the last scheduled job from this set with
completion time $C_{(k,l)}$.

\item \textbf{If} $l=e(z_{k})$ set $G_{u,k}=[a_{u,k}^{\ast}+\alpha
_{u-1},\,b_{u,k}^{\ast}]$ and \textbf{Stop}: All items from $S_{k,z_{k}}$ are assigned.

\item \textbf{Else} set $\alpha_{u}=C_{(k,l)}-b_{u,k}^{\ast}$ and
$G_{u,k}=[a_{u,k}^{\ast}+\alpha_{u-1},\,b_{u,k}^{\ast}+\alpha_{u}%
]=[a_{u,k}^{\ast}+\alpha_{u-1},\,C_{(k,l)}]$.

\item Shift intervals from $G$ which are located between $b_{u,k}^{\ast}$ and
$a_{u+1,k}^{\ast}$ without any modification of their lengths by $\alpha_{u+1}$
to the right.

\item End \textbf{If/Else}
\end{enumerate}

\item End \textbf{For}

%\item Process remaining jobs from $S_{k,z_{k}}$ (in non-decreasing order of
%release dates) starting at time $a_{\gamma_{k},k}^{\ast}+\alpha_{\gamma_{k}%
%-1}$. Set $G_{\gamma_{k},k}=[a_{\gamma_{k},k}^{\ast}+\alpha_{\gamma_{k}%
%-1},\,b_{\gamma_{k},k}^{\ast}]$.

\item End \textbf{For}
\end{enumerate}

\bigskip

%To complete the description of this procedure
%\emph{Schedule}$^{\mathbf{\#}}$($k$), it is worthy to note that it does not
%change the running periods of jobs already scheduled in the optimal solution
%during $[0,a_{1}^{\ast}]$ or after $[b_{\gamma_{k}}^{\ast},T_{1}]$. The other
%intervals $[b_{1}^{\ast},a_{2}^{\ast}]$, $[b_{2}^{\ast},a_{3}^{\ast}]$, ...,
%$[b_{\gamma_{k}-1}^{\ast},a_{\gamma_{k}}^{\ast}]$ used for the other jobs
%$\Upsilon$ scheduled in the optimal solution are shifted without any
%modification of their respective lenghts: $[b_{1}^{\ast}+\beta_{1},a_{2}%
%^{\ast}+\alpha_{2}]$, $[b_{2}^{\ast}+\beta_{2},a_{3}^{\ast}+\alpha_{3}]$, ...,
%$[b_{\gamma_{k}-1}^{\ast}+\beta_{\gamma_{k}-1},a_{\gamma_{k}}^{\ast}%u,k}$ a
%+\alpha_{\gamma_{k}}]$.

Notice that the value $\alpha_{u}$ represents the ``overload'' of the $u$-th
interval $G_{u,k}$ in a given set $G_{k}$. The following interval $G_{u+1,k}$
is then shortened by the same amount $\alpha_{u}$. Consequently, only the
intervals between $G_{u,k}$ and $G_{u+1,k}$ are shifted to the right, but no
others. Moreover, the total processing time assigned until $b_{u+1,k}^{\ast}$
in the $k$-th iteration of \emph{Schedule}$^{\#}$ is not greater than the
original total length of the intervals $G_{1,k}, G_{2,k},\ldots, G_{u+1,k}$.
Since for every family $k$ the jobs in $S_{k,z_{k}}$ have the smallest release
times and since by (\ref{release}) these jobs are sorted in non-decreasing
order of release times, no idle time is created during Step~3. Together with
the first inequality of (\ref{ecart}) this guarantees that the all jobs of
family $k$ are processed without changing the right endpoint of the last
interval $G_{\gamma_{k},k}$, i.e., creating no overload in this interval. That
ensures the feasibility of the schedule yielded by \emph{Schedule}%
$^{\mathbf{\#}}$ and that $G_{\gamma_{k},k}=[a_{\gamma_{k},k}^{\ast}%
+\alpha_{\gamma_{k}-1},\,b_{\gamma_{k},k}^{\ast}]$.

In each iteration of the procedure the jobs will be delayed by at most
$\max_{u\leq\gamma_{k}}\{\alpha_{u}\}\leq\delta$. Since this possible delay
may occur for every family $k$, $k=1,2,...,f$, the possible overall delay for
every job, after the $f$ applications of procedure \emph{Schedule}$^{\#}$ is
therefore bounded by $f\delta$. Hence, (\ref{xx}) follows.

The finishing time of the last interval among the intervals $G_{\gamma_{k},k},
k=1,\ldots, f$, generated by \emph{Schedule}$^{\#}$ determines the makespan of
instance $I_{1}^{\#}$. But this interval is never shifted to the right and we
obtain (\ref{zz}).

Recall that if $S_{k,\ast}=S(k)$, we get with $z_{k}=m(k)$ that $S_{k,m(k)}%
=S(k)$. Hence, in the case that no jobs of family $k$ are scheduled after
$T_{2}$ in $\sigma_{2}^{\ast}(I_{2}^{\ast})$, also $\sigma_{2}^{\#}(I_{2}%
^{\#})$ contains no jobs of family $f$.

Hence, assume in the following that $S(k)\backslash S_{k,\ast}\not =\emptyset$
and let $[c_{1,k}^{\ast},d_{1,k}^{\ast}],[c_{2,k}^{\ast},d_{2,k}^{\ast
}],\ldots,[c_{\lambda_{k},k}^{\ast},d_{\lambda_{k},k}^{\ast}]$ be the
intervals in which the jobs of $S(k)\backslash S_{k,\ast}$ are scheduled in
the optimal schedule $\sigma_{2}^{\ast}(I_{2}^{\ast})$. The schedule
$\sigma_{2}^{\#}(I_{2}^{\#})$ will be created similarly to procedure
\emph{Schedule}$^{\#}$.

As a consequence of performing possibly less amount of jobs from every $S(k)$
before $T_{1}$ in the feasible schedule $\sigma_{1}^{\#}(I_{1}^{\#})$, an
additional small quantity of processing time must be scheduled after $T_{2}$
compared to the optimal solution. This amount is bounded by $2\delta$ by the
second inequality of (\ref{ecart}). Therefore, in each iteration $k$ we start
by enlarging the last interval by $2\delta$, i.e., $[c_{\lambda_{k},k}^{\ast
},d_{\lambda_{k},k}^{\ast}+2\delta]$ instead of $[c_{\lambda_{k},k}^{\ast
},d_{\lambda_{k},k}^{\ast}]$, and shifting those jobs starting after
$d_{\lambda_{k},k}^{\ast}$ in $\sigma_{2}^{\ast}(I_{2}^{\ast})$ by the same
distance $2\delta$ to the right. Then, all what we have to do in order to
continue the construction of our feasible schedule $\sigma_{2}^{\#}(I_{2}%
^{\#})$ is to insert the remaining jobs from every $S(k)\backslash S_{k,z_{k}%
}$ in the corresponding intervals $[c_{1,k}^{\ast},d_{1,k}^{\ast}%
],[c_{2,k}^{\ast},d_{2,k}^{\ast}],\ldots,[c_{\lambda_{k},k}^{\ast}%
,d_{\lambda_{k},k}^{\ast}+2\delta]$ like in iteration $k$ of procedure
\emph{Schedule}$^{\#}$.

Analogously to \emph{Schedule}$^{\#}$ intervals between $d_{u,k}^{\ast}$ and
$c_{u+1,k}^{\ast}$ are shifted to the right by some distance bounded by
$\delta$. Thus, in a certain iteration $k$ jobs are shifted to the right by
not more than $\max\{2\delta,\delta\}=2\delta$. Since the possible delay may
occur for every job family $k$, the possible overall delay for every job
scheduled in $\sigma_{2}^{\#}(I_{2}^{\#})$ is bounded by $2f\delta$. This
shows that inequality (\ref{yy}) is valid.
\end{proof}

\bigskip

\textbf{APPENDIX 2: Proof of Theorem 5}

\begin{proof}
The big jobs $B_{1}$ to be included in $X_{1}$ are completely guessed from $B$
in $O(2^{4/\varepsilon^{2}})$. The small jobs to be included in $X_{1}$ are
guessed from every subset $S(k)$ by considering only the $m(k)+1$ possible
subsets $S_{k,z}$. This can be done for every guessed $B_{1}$ in $O(\left(
m(k)+1\right)  ^{f})=O(\left(  4/\varepsilon^{2}\right)  ^{1/\varepsilon})$
time. In overall, the guessing of all the subsets $X_{1}$ will be done in
$O(2^{4/\varepsilon^{2}}.\left(  4/\varepsilon^{2}\right)  ^{1/\varepsilon})$.
Hence, from the computational analysis of the different steps of PTAS3, it is
clear that this algorithm runs polynomially in the size of instance $I$ of
$\Pi_{3}$ for a fixed accuracy $\varepsilon>0$.

Let us now consider the result of Lemma $4$. We established that a subset
$X_{1}^{\#}\subset X$ is generated in Step~2 of PTAS3 and that two feasible
schedules $\sigma_{1}^{\#}(I_{1}^{\#})$ and $\sigma_{2}^{\#}(I_{2}^{\#})$
exist such that (\ref{xx}), (\ref{zz}) and (\ref{yy}) hold. By
Theorem~\ref{theo1} we have%

\[
L_{\max}\left(  PTAS1(I_{1}^{\#},T_{1},3\varepsilon/5)\right)  \leq\left(
1+3\varepsilon/5\right)  L_{\max}(\sigma_{1}^{\#}(I_{1}^{\#})).
\]

From (\ref{xx}), we then deduce that
\begin{equation}
L_{\max}\left(  PTAS1(I_{1}^{\#},T_{1},3\varepsilon/5)\right)  \leq\left(
1+3\varepsilon/5\right)  \left(  L_{\max}(\sigma_{1}^{\ast}(I_{1}^{\ast
}))+f\delta\right)  . \label{xx1}%
\end{equation}

Moreover, we have%
\[
L_{\max}\left(  PTAS0(I_{2}^{\#},\varepsilon/3)\right)  \leq\left(
1+\varepsilon/3\right)  L_{\max}(\sigma_{2}^{\#}(I_{2}^{\#})).
\]

From (\ref{yy}) we establish that
\begin{equation}
L_{\max}\left(  PTAS0(I_{2}^{\#},\varepsilon/3)\right)  \leq\left(
1+\varepsilon/3\right)  \left(  L_{\max}(\sigma_{2}^{\ast}(I_{2}^{\ast
}))+2f\delta\right)  . \label{yy1}%
\end{equation}

By using (\ref{zz}) we get
\begin{equation}
C_{\max}(\sigma_{1}^{\#}(I_{1}^{\#}))\leq C_{\max}(\sigma_{1}^{\ast}%
(I_{1}^{\ast}))\leq T_{1}. \label{eps3}%
\end{equation}

Inequality (\ref{eps3}) implies that the algorithm $PTAS1(I_{1}^{\#}%
,T_{1},3\varepsilon/5)$ yields a feasible solution with $C_{\max}%
(PTAS1(I_{1}^{\#},T_{1},3\varepsilon/5))\leq T_{1}$.

Let us now define the global solution $\sigma^{\#}(I)$ for instance $I$ (of
$\Pi_{3}$) as the union of the two feasible subsolutions $PTAS1(I_{1}%
^{\#},T_{1},3\varepsilon/5)$ and $PTAS0(I_{2}^{\#},\varepsilon/3)$. The
following relation holds
\[
L_{\max}\left(  \sigma^{\#}(I)\right)  =\max\{L_{\max}(PTAS1(I_{1}^{\#}%
,T_{1},3\varepsilon/5)),L_{\max}\left(  PTAS0(I_{2}^{\#},\varepsilon
/3)\right)  \}.
\]

Thus, by applying (\ref{xx1}) and (\ref{yy1}), we deduce that
\begin{equation}
L_{\max}\left(  \sigma^{\#}(I)\right)  \leq\max\{\left(  1+3\varepsilon
/5\right)  \left(  L_{\max}(\sigma_{1}^{\ast}(I_{1}^{\ast}))+f\delta\right)
,\left(  1+\varepsilon/3\right)  \left(  L_{\max}(\sigma_{2}^{\ast}%
(I_{2}^{\ast}))+2f\delta\right)  \}. \label{aux}%
\end{equation}
From (\ref{aux}) and by observing that $f\delta\leq\varepsilon L_{\max}^{\ast
}(I)/4$, we obtain for $\varepsilon\leq1$ the final relations%

\begin{align*}
L_{\max}\left(  \sigma^{\#}(I)\right)   &  \leq\max\{L_{\max}(\sigma_{1}%
^{\ast}(I_{1}^{\ast}))+\varepsilon L_{\max}^{\ast}(I),L_{\max}(\sigma
_{2}^{\ast}(I_{2}^{\ast}))+\varepsilon L_{\max}^{\ast}(I)\}\\[1ex]
&  =\max\{L_{\max}(\sigma_{1}^{\ast}(I_{1}^{\ast})),L_{\max}(\sigma_{2}^{\ast
}(I_{2}^{\ast}))\}+\varepsilon L_{\max}^{\ast}(I)=(1+\varepsilon)L_{\max
}^{\ast}(I).
\end{align*}
This finishes the proof of Theorem~\ref{theononavailability}.
\end{proof}

\end{document}